\numberwithin{equation}{section}
\newtheorem{Theorem}{Theorem}[section]
\newtheorem{Corollary}[Theorem]{Corollary}
\newtheorem{Lemma}[Theorem]{Lemma}
\newtheorem{Proposition}[Theorem]{Proposition}
 { \theoremstyle{definition}
\newtheorem{Definition}[Theorem]{Definition}
\newtheorem{Remark}[Theorem]{Remark} }
\newcommand{\abs}[1]{\left\lvert #1 \right\rvert }
\mathchardef\pFcomma=\mathcode`, % keep a copy of the comma
\newcommand*\pFq[5]{%
 \begingroup
 \begingroup\lccode`~=`,
 \lowercase{\endgroup\def~}{\pFcomma\mkern\pFqskip}%
 \mathcode`,=\string"8000
 {}_{#1}F_{#2}\biggl[\genfrac..{0pt}{}{#3}{#4};#5\biggr]%
 \endgroup
}
\newcommand{\dbyd}[2][]{\frac{d #1}{d #2}}
\newcommand{\pdbyd}[2][]{\frac{\partial #1}{\partial #2}}
\renewcommand{\geq}{\geqslant}
\renewcommand{\leq}{\leqslant}
\newcommand{\bn}{\boldsymbol{n}}
\begin{document}

\allowdisplaybreaks

\newcommand{\arXivNumber}{1508.06689}

\renewcommand{\thefootnote}{$\star$}

\renewcommand{\PaperNumber}{079}

\FirstPageHeading

\ShortArticleName{Hypergeometric Integral for Hypersphere Fundamental Solutions}

\ArticleName{A Hypergeometric Integral with Applications \\ to the Fundamental Solution of Laplace's Equation \\ on Hyperspheres\footnote{This paper is a~contribution to the Special Issue on Orthogonal Polynomials, Special Functions and Applications. The full collection is available at \href{http://www.emis.de/journals/SIGMA/OPSFA2015.html}{http://www.emis.de/journals/SIGMA/OPSFA2015.html}}}

\Author{Richard CHAPLING}
\AuthorNameForHeading{R.~Chapling}

\Address{Department of Applied Mathematics and Theoretical Physics,\\ University of Cambridge, Cambridge, England}
\Email{\href{mailto:rc476@cam.ac.uk}{rc476@cam.ac.uk}}

\ArticleDates{Received November 23, 2015, in f\/inal form August 04, 2016; Published online August 10, 2016}

\Abstract{We consider Poisson's equation on the $n$-dimensional sphere in the situation where the inhomogeneous term has zero integral. Using a number of classical and modern hypergeometric identities, we integrate this equation to produce the form of the fundamental solutions for any number of dimensions in terms of generalised hypergeometric functions, with dif\/ferent closed forms for even and odd-dimensional cases.}

\Keywords{hyperspherical geometry; fundamental solution; Laplace's equation; separation of variables; hypergeometric functions}

\Classification{35A08; 35J05; 31C12; 33C05; 33C20}

\renewcommand{\thefootnote}{\arabic{footnote}}
\setcounter{footnote}{0}

\section{Introduction}
The calculation of fundamental solutions is dif\/f\/icult even for simple geometries. However, in the case of Laplace's equation on the $n$-dimensional sphere, the sphere's homogeneity allows us to reduce the partial dif\/ferential equation to an ordinary one, rendering the computation tractable. The author is aware of two previous calculations of this type in the literature, \cite{Cohl:2011vn} and \cite{Szmytkowski:2007gf}, which disagree. In this paper, we shall proceed along similar lines to \cite{Cohl:2011vn}; in the process, it will become apparent that the calculation in \cite{Cohl:2011vn} is in fact the correct solution to a subtly dif\/ferent problem. We shall also check the consistency of our results with those of \cite{Szmytkowski:2007gf} in the case $n=2$.

The even- and odd-dimensional cases are treated separately: one may be expressed as a~hyper\-geometric polynomial in $\cot{\tfrac{1}{2} \theta}$, the other as a sum of an even polynomial in $\cot{\theta}$ and an odd polynomial multiplied by $\pi-\theta$. There is reason to expect this disparity, given the other ways in which spheres of even and odd dimension are dif\/ferent, such as the hairy ball theorem and the volume in terms of factorials.

The last section gives some applications of our result, including the dipole potential on the sphere, the azimuthal Fourier expansion, fundamental solutions on real projective space by considering it as a quotient of the sphere, and a similar calculation explaining the problem that~\cite{Cohl:2011vn} is actually solving.

\section{The generalised Green's function}\label{sec:ggf}

Let $M$ be a compact manifold, $ S^{n} $ the (unit) $n$-dimensional hypersphere, and $ S^{n}_{R} $ the hypersphere of radius $R$.

In a recent paper \cite{Cohl:2011vn} is what is claimed to be a calculation of a fundamental solution of Laplace's equation on hyperspheres. The starting point of that paper is the equation
\begin{gather}\label{eq:notgf}
	-\Delta{G(x,x')} = \delta(x,x'),
\end{gather}
where $x,x' \in S^{n}_{R}$, and $-\Delta $ is the Laplace--Beltrami operator. However, this is not a well-posed equation on the sphere (or, indeed, any compact manifold)\footnote{On the other hand, the later paper~\cite{Cohl:2012ly} is unaf\/fected, since the hyperbolic manifolds therein are not compact.}; the best way to demonstrate this is to consider the Poisson equation that we normally use fundamental solutions to solve:
\begin{gather}\label{eq:Poisson}
	-\Delta{u} = \rho,
\end{gather}
where $\rho \in L^{2}(S^{n}_{R}) \cap C^{1}(S^{n}_{R})$.

The divergence theorem,
\begin{gather*} \int_{M} ( \operatorname{div}{X} ) \, dV_{g} = \int_{\partial M} \langle X , N \rangle_{g} \, dV_{\tilde{g}} \end{gather*}
for any $ u \in C^{\infty}(M) $ \cite[Theorem~16.32]{Lee:2013ys}\footnote{This is, of course, just a recasting of the fundamental theorem of exterior calculus, $\int_{M} d\omega = \int_{\partial M} \omega$.} applied to $\operatorname{grad}{u}$ shows that
\begin{gather*} \int_{S^{n}_{R}} (-\Delta){u} = 0, \end{gather*}
because $\partial S^{n}_{R} = \varnothing$; (\ref{eq:Poisson}) then implies that
\begin{gather*} \int_{S^{n}_{R}} \rho = 0. \end{gather*}

If we interpret \eqref{eq:notgf} distributionally, we should have
\begin{align*}
	\langle -\Delta_{x} \langle G(x;y),\varphi(y) \rangle_{y}, \chi(x) \rangle_{x} &= \langle \langle G(x;y),\varphi(y) \rangle_{y}, -\Delta_{x}\chi(x) \rangle_{x} \\
	&= \langle \langle \delta_{x}(y),\varphi(y) \rangle_{y},\chi(x) \rangle_{x}  = \langle \varphi(x),\chi(x) \rangle_{x} = \int_{M} \varphi \chi,
\end{align*}
for every pair of test functions $ \varphi,\chi \in \mathcal{D} := C^{\infty}(M) $, where the f\/irst equality is the def\/inition of the Laplacian as a distributional derivative, the second the def\/inition of $G$ as a distribution, the third the def\/inition of~$ \delta_{x}$, and the fourth the def\/inition of the pairing for smooth functions.

If we take $ \chi(x) \equiv 1 $ in the above equation, we arrive at a contradiction, since the f\/irst line must be zero and the last line need not be; hence, we need to modify the starting point, and consider a dif\/ferent initial def\/inition of $ G$. Suppose that we instead include a constant on the right-hand side of \eqref{eq:notgf}:
\begin{gather*}%\label{eq:ggf}
	-\Delta{G(x,x')} = \delta(x,x')-a.
\end{gather*}
Carrying out the same procedure, we obtain
\begin{gather*}
	\langle -\Delta_{x} \langle G(x;y),\varphi(y) \rangle_{y}, \chi(x) \rangle_{x}  = \langle \langle \delta_{x}(y)-a,\varphi(y) \rangle_{y},\chi(x) \rangle_{x} = \int_{M} \varphi \chi-a\left( \int_{M}\! \varphi \right) \left( \int_{M} \!\chi \right),
\end{gather*}
which can only be consistent with the $\chi(x) \equiv 1$ case if $a=1/\abs{M}$, where $\abs{M}$ is the volume of~$M$, i.e.,
\begin{gather*} \abs{M} = \int_{M}dV_{g}. \end{gather*}
The behaviour is then that required of the usual Green's function in the space of functions orthogonal to~$1$.

See also \cite[p.~354~f\/f.]{Courant:1953zr}, wherein this is discussed as the notion of ``Green's function in the generalised sense''. Much of the theory carries straight over, provided that $ \delta $ is replaced by $ \delta-\abs{M}^{-1} $ in all cases\footnote{See, for example, \cite[p.~108~f\/f.]{Aubin:1998vn}.}.

Lastly in our general discussion, we note that a function $G$ with these properties need not be unique: in particular, if we have two solutions, $ G$, $G'$, then
\begin{gather*}
	-\Delta(G-G')=\big(\delta-\abs{M}^{-1}\big)-\big(\delta-\abs{M}^{-1}\big)=0,
\end{gather*}
so the dif\/ference is a harmonic function. The maximum principle implies that the only harmonic functions on a compact manifold are constant\footnote{See, for example,~\cite{yamasuge1957}.}, so $G$ is determined up to an additive constant, which we may choose freely. Our convention shall be to choose the constant so that~$G$ is nonnegative on~$S^{n}_{R}$; we shall construct it to be so with a zero at the antipodal point to~$x$.

\section[The ordinary dif\/ferential equation for $G_{n}$]{The ordinary dif\/ferential equation for $\boldsymbol{G_{n}}$}

\subsection{The Laplace--Beltrami operator on a hyperspherical manifold}
To be explicit, let the Green's function for the $ n $-dimensional hypersphere be denoted~$G_{n}$. One standard coordinate chart on the $n$-dimensional hypersphere is $(\theta_{i},\varphi)$, $0 < \theta_{i} < \pi$, $0 < \varphi < 2\pi$; this shall suf\/f\/ice for our purposes. (Our derivation does not depend on $\varphi$, and excluding the point antipodal to $ \theta=0 $ will also not matter, since the function is constructed to be smooth there.) The metric is given by \cite[p.~52, Example~7.4]{iliev2006handbook}
\begin{gather*}
	g = \operatorname{diag}\big(R^{2}\sin^{2}{\theta_{1}},R^{2}\sin^{2}{\theta_{1}}\sin^{2}{\theta_{2}}, \dotsc\big).
\end{gather*}

Taking $\theta_{1}=\theta$, since a fundamental solution in a homogeneous set does not depend on direction, our fundamental solution reduces to $G_{n}=G_{n}(\theta)$, so we f\/ind that the Laplacian reduces to
\begin{gather*} \Delta{f(\theta)} = \frac{1}{R^{2}\sin^{n-1}{\theta}} \dbyd{\theta} \big( \sin^{n-1}{\theta} f'(\theta) \big). \end{gather*}

Therefore we need to solve the ordinary dif\/ferential equation
\begin{gather}\label{eq:de}
	f'' + (n-1) \cot{\theta} f' = aR^{2},
\end{gather}
for $\theta \in (0,\pi)$, where~$a$ is the constant def\/ined in the previous section. (Since the interval does not contain $ \theta=0 $, the $ \delta(\theta) $ term is not present.) The theory in the previous section dictates that, with~$ S_{n} $ the volume of the unit $ n $-sphere,
\begin{gather}\label{eq:constant}
	a = \frac{1}{S_{n}R^{n}} = \frac{\Gamma((n+1)/2)}{2\pi^{(n+1)/2}R^{n}},
\end{gather}	
which is the reciprocal of the volume of the $n$-sphere; we discuss the relationship of this global constraint with the behaviour as $\theta \downarrow 0$ in Section~\ref{sec:ccc} below.

\subsection{Formal solution}
Recall that a f\/irst-order dif\/ferential equation
\begin{gather*} y' + g y = u \end{gather*}
may be written in the form
\begin{gather*} e^{-g} (ye^{g})' = u. \end{gather*}
It is then simple to show that for suitable limits on the integrals,
\begin{gather*} y(x) = e^{-g(x)} \int_{x_{0}}^{x} e^{g(t)} u(t) \, dt \end{gather*}
is the solution, where $x_{0}$ is some constant.

We conclude that (\ref{eq:de}), with $a$ as in \eqref{eq:constant}, has the formal solution
\begin{gather}\label{eq:formalsoln}
	G_{n}(\theta) = \frac{1}{S_{n}R^{n-2}} \int_{\theta}^{\pi} \csc^{n-1}{\phi} \int_{\phi}^{\pi} \sin^{n-1}{\psi} \, d\psi \, d\phi.
\end{gather}
(Since the integrand is a smooth function of $ \phi $ for $0<\phi \leq \pi$, the integral is smooth and bounded on $[\varepsilon,\pi]$ for any suf\/f\/iciently small $\varepsilon>0$.)

\subsection[The behaviour as $\theta \downarrow 0$: consistency with fundamental solutions on f\/lat space]{The behaviour as $\boldsymbol{\theta \downarrow 0}$: consistency with fundamental solutions\\ on f\/lat space}\label{sec:ccc}

In this section we consider the limit for $n \neq 2$; the two-dimensional case is analogous using $ \log{s} $. The integral~\eqref{eq:formalsoln} was determined by the global behaviour of fundamental solutions (\emph{viz.}\ smoothness away from the pole at $\theta=0$). On the other hand, the arc-length on the hypersphere of radius $R$ is given by $ s=R\theta $, so we expect to f\/ind that in the f\/lat-space limit as $R \to \infty$,
\begin{gather*}
	G_{n}(s/R) \sim \frac{1}{(n-2)S_{n-1}s^{n-2}},
\end{gather*}
which shall give us the local behaviour as $\theta \downarrow 0$. We wish to verify that our solution has this local behaviour. To agree with the f\/lat-space fundamental solution, our fundamental solution should also exhibit a singularity of order $\theta^{2-n}$ near $\theta=0$, with coef\/f\/icient $1/((n-2)S_{n-1}R^{n-2})$. We therefore need to have
\begin{gather*}
	\frac{1}{S_{n}R^{n-2}}\int_{\theta}^{\pi} \csc^{n-1}{\phi} \int_{\phi}^{\pi} \sin^{n-1}{\psi} \, d\psi \, d\phi \sim \frac{1}{(n-2)S_{n-1}(R\theta)^{n-2}}
\end{gather*}
as $\theta \downarrow 0$. We can drop the part of the $\phi$ integral from $\pi/2$ to $\pi$ as irrelevant, it being just a~constant, and then we need to examine
\begin{gather*}
	\frac{1}{S_{n}R^{n-2}}\int_{\theta}^{\pi/2} \csc^{n-1}{\phi} \int_{\phi}^{\pi} \sin^{n-1}{\psi} \, d\psi \, d\phi.
\end{gather*}
It is apparent that as $\theta \downarrow 0$, the integrand behaves as
\begin{gather*}
	\csc^{n-1}{\phi} \int_{\phi}^{\pi} \sin^{n-1}{\psi} \, d\psi \sim \phi^{1-n} \int_{0}^{\pi} \sin^{n-1}{\psi} \, d\psi = \phi^{1-n} \frac{\pi^{1/2}\Gamma(n/2)}{\Gamma((n+1)/2)}.
\end{gather*}
A result in asymptotic analysis shows that if $F$ is suf\/f\/iciently well-behaved and $ F(x) \sim f(x) $ as $x \downarrow x_{0}$, then for $ x \in (x_{0},x_{0}+\varepsilon)$, $\varepsilon>0$, we have $\int_{x}^{x_{0}+\varepsilon} F \sim \int_{x}^{x_{0}+\varepsilon} f $ as $x \downarrow x_{0}$. It is easy to see that the integrand behaves as such, and hence we f\/ind
\begin{align*}
	\frac{R^{2-n}}{S_{n}}\int_{\theta}^{\pi/2} \csc^{n-1}{\phi} \int_{\phi}^{\pi} \sin^{n-1}{\psi} \, d\psi \, d\phi \sim \frac{\theta^{2-n}R^{2-n}}{(n-2)S_{n}} \frac{\pi^{1/2}\Gamma(n/2)}{\Gamma((n+1)/2)} = \frac{(R\theta)^{2-n}}{(n-2)S_{n-1}},
\end{align*}
as expected, using the recurrence relation for the hyperspherical volumes.

\section{Calculating explicit expressions for our fundamental solution}
In the following sections we shall consider two separate ways of determining our fundamental solution explicitly: the f\/irst proves that the solutions are f\/inite sums of various trigonometric functions by deriving recurrence relations, the second uses hypergeometric functions and identities to produce closed form expressions for these sums, and recurrence relations that determine the explicit form of the solution. Even and odd dimensions are treated separately since the form of the solution expression is quite dif\/ferent in each case.

We shall use the notation
\begin{gather}\label{eq:ImJmdefs}
	I_{m}(\phi) := \int_{\phi}^{\pi} \sin^{m-1}{\psi} \, d\psi, \qquad
	J_{m}(\theta) := \int_{\theta}^{\pi} \frac{I_{m}(\phi)}{\sin^{m-1}{\phi}} \, d\phi.
\end{gather}
 It is apparent that $J_{m}(\theta)$ shall be proportional to our fundamental solution: in particular,
 \begin{gather*}
 	G_{n}(\theta) = \frac{R^{2-n}}{S_{n}} J_{n}(\theta).
 \end{gather*}

\subsection{Recurrence relation}\label{sec:rr}

\begin{Proposition}
The integral $ I_{m}(\phi) $ satisfies the recurrence relation
\begin{gather}	\label{eq:Irec}
I_{m}(\phi) = \frac{m-2}{m-1} I_{m-2}(\phi) + \frac{1}{m-1} \sin^{m-2}{\phi} \cos{\phi}
\end{gather}
for $m>0$, with basis cases
\begin{gather*}
I_{1}(\phi) = \pi-\phi, \qquad I_{2}(\phi) = 1 + \cos{\phi}.
\end{gather*}
\end{Proposition}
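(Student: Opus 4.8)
The plan is to derive the recurrence by a single integration by parts applied to the defining integral \eqref{eq:ImJmdefs}, combined with the Pythagorean identity. First I would peel off one factor of the sine, writing the integrand as $\sin^{m-2}\psi \cdot \sin\psi$, and integrate by parts with $u = \sin^{m-2}\psi$ and $dv = \sin\psi \, d\psi$, so that $v = -\cos\psi$ and $du = (m-2)\sin^{m-3}\psi \cos\psi \, d\psi$. The boundary term is $\big[-\sin^{m-2}\psi \cos\psi\big]_{\phi}^{\pi}$, whose upper-limit contribution at $\psi = \pi$ vanishes provided $m-2 > 0$, leaving only the lower-limit piece $\sin^{m-2}\phi \cos\phi$.

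For the leftover integral I would replace $\cos^2\psi = 1 - \sin^2\psi$, which turns it back into a combination of the quantities already named in \eqref{eq:ImJmdefs}:
\begin{gather*}
	(m-2)\int_{\phi}^{\pi} \sin^{m-3}\psi\,\cos^2\psi \, d\psi = (m-2) I_{m-2}(\phi) - (m-2) I_{m}(\phi).
\end{gather*}
Collecting the two occurrences of $I_m$ gives $(m-1) I_m(\phi) = (m-2) I_{m-2}(\phi) + \sin^{m-2}\phi \cos\phi$, and dividing through by $m-1$ produces exactly \eqref{eq:Irec}.

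The two basis cases I would dispatch by direct evaluation: $I_1(\phi) = \int_{\phi}^{\pi} d\psi = \pi - \phi$, and $I_2(\phi) = \int_{\phi}^{\pi} \sin\psi\,d\psi = \big[-\cos\psi\big]_{\phi}^{\pi} = 1 + \cos\phi$.

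The single point demanding care — and really the only obstacle — is the boundary term at $\psi = \pi$. The factor $\sin^{m-2}\pi$ vanishes precisely when $m \geq 3$, which is exactly the range in which the recurrence is applied. For $m = 2$ this factor equals $1$ rather than $0$, so the boundary contribution at $\pi$ does not drop out, and the naive recurrence would lose a term; this is why $m = 2$ (together with $m = 1$) must be furnished as a separate basis case rather than being generated by \eqref{eq:Irec}.
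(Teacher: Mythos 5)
Your proof is correct and is essentially the paper's own argument: the paper first applies the Pythagorean identity, writing $\sin^{m-1}\psi = \sin^{m-3}\psi\,(1-\cos^{2}\psi)$, and then integrates the $\cos^{2}$ piece by parts, whereas you integrate by parts first and then apply the identity --- the same computation in the opposite order, arriving at the identical equation $(m-1)I_{m}(\phi) = (m-2)I_{m-2}(\phi) + \sin^{m-2}\phi\cos\phi$. Your closing remark is in fact sharper than the paper: the recurrence is only valid for $m \geq 3$ (for $m=2$ the boundary term at $\psi=\pi$ does not vanish, and $I_{0}$ is a divergent integral), so the proposition's stated range $m>0$ is a slip which your boundary-term analysis correctly diagnoses and which explains why both $m=1$ and $m=2$ must be supplied as basis cases.
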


\begin{proof}It is easy to verify the basis cases directly. We can carry out a standard calculation to f\/ind
\begin{align*}
I_{m}(\phi) &= \int_{\phi}^{\pi} \sin^{m-3}{\psi}\big(1-\cos^{2}{\psi}\big) \, d\psi
= I_{m-2}(\phi) - \int_{\phi}^{\pi} \sin^{m-3}{\psi}\cos^{2}{\psi} \, d\psi \\
&= I_{m-2}(\phi) - \left[\frac{1}{m-2} \sin^{m-2}{\psi} \cos{\psi}\right]_{\phi}^{\pi} + \frac{1}{m-2}\int_{\phi}^{\pi} \sin^{m-2}{\psi}(-\sin{\psi}) \, d\psi \\
&= I_{m-2}(\phi) + \frac{1}{m-2} \sin^{m-2}{\phi} \cos{\phi} - \frac{1}{m-2} I_{m}(\phi),
\end{align*}
and we therefore have
\begin{gather*}
	I_{m}(\phi) = \frac{m-2}{m-1} I_{m-2}(\phi) + \frac{1}{m-1} \sin^{m-2}{\phi} \cos{\phi}
\end{gather*}
for $m>0$, as required.
\end{proof}

We may then verify, for example,
\begin{gather*} I_{3}(\phi) = \frac{2-1}{2} I_{1}(\phi) + \frac{1}{2} \sin^{2-1}{\phi}\cos{\phi} = \frac{1}{2}(\pi-\phi) + \frac{1}{2}\sin{\phi}\cos{\phi}. \end{gather*}
We can use this to f\/ind a recurrence relation for $J_{m}(\theta)$.

\begin{Proposition}
For $m>0$, the integral $ J_{m} $ satisfies the recurrence relation
\begin{gather*}
J_{m}(\theta) = \frac{m-3}{m-1} J_{m-2}(\theta) + \frac{1}{m-1} \frac{\cos{\theta}}{\sin^{m-2}{\theta}} I_{m-2}(\theta) + \frac{1}{(m-1)(m-2)}
\end{gather*}
with basis cases
\begin{gather*}
J_{1}(\theta) = \frac{1}{2}(\pi - \theta)^{2}, \qquad J_{2}(\theta) = \log{\csc^{2}{\tfrac{1}{2}\theta}}.
\end{gather*}
\end{Proposition}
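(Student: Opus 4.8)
The plan is to verify the two basis cases by direct integration, and then to obtain the recurrence for $m \geq 3$ by substituting the recurrence~\eqref{eq:Irec} for $I_m$ into the definition~\eqref{eq:ImJmdefs} of $J_m$ and lowering the power of $\csc\phi$ in the denominator by a single integration by parts. The basis cases are immediate: since $\sin^{0}\phi = 1$, we get $J_{1}(\theta) = \int_{\theta}^{\pi} I_{1}(\phi)\,d\phi = \int_{\theta}^{\pi}(\pi-\phi)\,d\phi = \tfrac{1}{2}(\pi-\theta)^{2}$, while the half-angle identity $\frac{1+\cos\phi}{\sin\phi} = \cot\tfrac12\phi$ turns $J_{2}(\theta) = \int_{\theta}^{\pi} \frac{I_{2}(\phi)}{\sin\phi}\,d\phi$ into $\int_{\theta}^{\pi}\cot\tfrac12\phi\,d\phi = \bigl[2\log\sin\tfrac12\phi\bigr]_{\theta}^{\pi} = \log\csc^{2}\tfrac12\theta$.

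For the recurrence, I would first feed \eqref{eq:Irec} into the definition of $J_m$ to get
\[
J_{m}(\theta) = \frac{m-2}{m-1}\underbrace{\int_{\theta}^{\pi} \frac{I_{m-2}(\phi)}{\sin^{m-1}\phi}\,d\phi}_{A} + \frac{1}{m-1}\underbrace{\int_{\theta}^{\pi}\cot\phi\,d\phi}_{B}.
\]
The main obstacle is that $A$ and $B$ \emph{individually} diverge logarithmically as $\phi\uparrow\pi$ (since $I_{m-2}(\phi)\sim(\pi-\phi)^{m-2}/(m-2)$ while $\sin^{m-1}\phi\sim(\pi-\phi)^{m-1}$ and $\cot\phi\sim-1/(\pi-\phi)$), even though $J_m$ itself is finite. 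I would therefore carry out the whole calculation on $[\theta,\pi-\varepsilon]$ and only let $\varepsilon\downarrow 0$ at the very end, so that the cancellation of divergences is rigorous.

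The power of $\sin\phi$ in the denominator of $A$ is one too high to reproduce $J_{m-2}$, so I reduce it with the exact identity
\[
\dbyd{\phi}\left(\frac{\cos\phi}{\sin^{m-2}\phi}\right) = \frac{m-3}{\sin^{m-3}\phi} - \frac{m-2}{\sin^{m-1}\phi},
\]
which rewrites $\csc^{m-1}\phi$ as a multiple of $\csc^{m-3}\phi$ plus a total derivative. Substituting this into $A$ and integrating the total-derivative part by parts against $I_{m-2}$, using $I_{m-2}'(\phi) = -\sin^{m-3}\phi$, yields three pieces: the term $\tfrac{m-3}{m-2}J_{m-2}$ from the reduced integral; the two endpoint contributions of the integrated term, namely $\lim_{\phi\uparrow\pi} I_{m-2}(\phi)\frac{\cos\phi}{\sin^{m-2}\phi} = -\frac{1}{m-2}$ at the upper limit (which, after the prefactors, produces the constant $\frac{1}{(m-1)(m-2)}$) and $-I_{m-2}(\theta)\frac{\cos\theta}{\sin^{m-2}\theta}$ at the lower limit (which produces the $\frac{\cos\theta}{\sin^{m-2}\theta}I_{m-2}(\theta)$ term); and a reappearance of the divergent integral $B$. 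Recombining these into $J_{m} = \frac{m-2}{m-1}A + \frac{1}{m-1}B$, the two copies of $B$ cancel identically, leaving exactly the stated recurrence. The delicate point throughout is precisely this cancellation, which is why the cutoff at $\pi-\varepsilon$ should be retained until the end.
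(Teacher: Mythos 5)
Your proof is correct, and it is rigorous as stated: the differentiation identity $\dbyd{\phi}\bigl(\cos\phi\,\sin^{2-m}\phi\bigr)=(m-3)\csc^{m-3}\phi-(m-2)\csc^{m-1}\phi$ is exact, the endpoint limit $I_{m-2}(\phi)\cos\phi/\sin^{m-2}\phi\to-1/(m-2)$ as $\phi\uparrow\pi$ follows from $I_{m-2}(\phi)\sim(\pi-\phi)^{m-2}/(m-2)$ (valid for $m\geq 3$, which is the genuine range of the recurrence), and since the two copies of the divergent integral $B$ cancel identically already on $[\theta,\pi-\varepsilon]$, the limit $\varepsilon\downarrow 0$ goes through. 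However, your route is genuinely different from the paper's in its decomposition and in where the delicacy sits. The paper keeps $I_m$ intact and instead splits the kernel by Pythagoras, $\csc^{m-1}\phi=\csc^{m-3}\phi+\cos^{2}\phi\,\csc^{m-1}\phi$, then integrates $\int_\theta^\pi I_m(\phi)\cos\phi\cdot\cos\phi\,\csc^{m-1}\phi\,d\phi$ by parts; because $I_m(\phi)\sim(\pi-\phi)^{m}/m$ decays two orders faster than $\sin^{m-2}\phi$ vanishes, the boundary term at $\pi$ is zero and every intermediate integral converges absolutely, so no cutoff is needed --- the paper's only delicate step is checking that decay. The price is a longer assembly: \eqref{eq:Irec} must be applied twice, once to turn $\int_\theta^\pi I_m(\phi)\csc^{m-3}\phi\,d\phi$ into $J_{m-2}$ plus an elementary integral, and once at the end to trade $I_m(\theta)$ for $I_{m-2}(\theta)$, with the constant emerging as $(\sin^{2}\theta+\cos^{2}\theta)/((m-1)(m-2))$ from two separate contributions. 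Your order of operations --- applying \eqref{eq:Irec} first --- buys a single application of the recurrence and lets the constant $1/((m-1)(m-2))$ drop straight out of the nonvanishing boundary term, but at the cost of passing through the individually divergent $A$ and $B$, which is exactly why your $\varepsilon$-bookkeeping (something the paper's arrangement avoids entirely) is indispensable rather than pedantic; the basis cases are handled identically in both arguments.
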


\begin{proof}
The basis cases are simple: for example,
\begin{gather*}
	J_{2}(\theta) = \int_{\theta}^{\pi} \frac{1+\cos{\phi}}{\sin{\phi}} \, d\phi = \int_{\theta}^{\pi} \cot{\tfrac{1}{2}\phi} \, d\phi = \log{\csc^{2}{\tfrac{1}{2}\theta}}.
\end{gather*}
For the general case, \cite{Cohl:2011vn} notes the useful trigonometric/partial fraction identity
\begin{gather*} \frac{1}{\sin^{m}{\phi}} = \frac{1}{\sin^{m-2}{\phi}} + \frac{\cos^{2}{\phi}}{\sin^{m}{\phi}}, \end{gather*}
which is a trivial consequence of Pythagoras's identity. We therefore write
\begin{gather*}
	J_{m}(\theta) = \int_{\theta}^{\pi} \frac{I_{m}(\phi)}{\sin^{m-3}{\phi}} \, d\phi + \int_{\theta}^{\pi} \frac{I_{m}(\phi)\cos^{2}{\phi}}{\sin^{m-1}{\phi}} \, d\phi.
\end{gather*}
The next step is the most delicate part: we integrate the last integral by parts, being careful to note that near $\theta=\pi$, $I_{m}(\phi) \sim \frac{1}{m}(\pi-\phi)^{m}$:
\begin{gather*}
	\int_{\theta}^{\pi} \frac{\cos{\phi}}{\sin^{m-1}{\phi}}I_{m}(\phi)\cos{\phi} \, d\phi \\
\qquad{} = \left[-\frac{1}{m-2}\frac{I_{m}(\phi)\cos{\phi}}{\sin^{m-2}{\phi}}\right]_{\theta}^{\pi} + \frac{1}{m-2} \int_{\theta}^{\pi}\frac{I'_{m}(\phi)\cos{\phi}-I_{m}(\phi)\sin{\phi}}{\sin^{m-2}{\phi}} \, d\phi \\
\qquad{} = \frac{1}{m-2}\frac{\cos{\theta}}{\sin^{m-2}{\theta}} I_{m}(\phi) - \frac{1}{m-2} \int_{\theta}^{\pi}\frac{\sin^{m-1}{\phi}\cos{\phi}}{\sin^{m-2}{\phi}} \, d\phi - \frac{1}{m-2} \int_{\theta}^{\pi}\frac{I_{m}(\phi)}{\sin^{m-3}{\phi}} \, d\phi \\
\qquad{} = \frac{1}{m-2}\frac{\cos{\theta}}{\sin^{m-2}{\theta}} I_{m}(\phi) - \frac{1}{m-2} \int_{\theta}^{\pi}\sin{\phi}\cos{\phi} \, d\phi - \frac{1}{m-2} \int_{\theta}^{\pi}\frac{I_{m}(\phi)}{\sin^{m-3}{\phi}} \, d\phi.
\end{gather*}
We therefore have
\begin{gather*}
	J_{m}(\theta) = \frac{1}{m-2} \left((m-3) \int_{\theta}^{\pi} \frac{I_{m}(\phi)}{\sin^{m-3}{\phi}} \, d\phi + \frac{\cos{\theta}}{\sin^{m-2}{\theta}} I_{m}(\theta) - \int_{\theta}^{\pi}\sin{\phi}\cos{\phi} \, d\phi \right).
\end{gather*}

The f\/irst integral is substantially easier: it splits as
\begin{align*}
	\int_{\theta}^{\pi} \frac{I_{m}(\phi)}{\sin^{m-3}{\phi}} \, d\phi &= \frac{m-2}{m-1}\int_{\theta}^{\pi} \frac{I_{m-2}(\phi)}{\sin^{m-3}{\phi}} \, d\phi + \frac{1}{m-1}\int_{\theta}^{\pi} \frac{\sin^{m-2}{\phi} \cos{\phi}}{\sin^{m-3}{\phi}} \, d\phi \\
&= \frac{m-2}{m-1} J_{m-2}(\theta) + \frac{1}{m-1}\int_{\theta}^{\pi} \sin{\phi} \cos{\phi} \, d\phi,
\end{align*}
and hence
\begin{gather*}
	J_{m}(\theta) = \frac{m-3}{m-1} J_{m-2}(\theta) + \frac{1}{m-2} \frac{\cos{\theta}}{\sin^{m-2}{\theta}} I_{m}(\theta) -\frac{2}{(m-1)(m-2)}\int_{\theta}^{\pi} \sin{\phi} \cos{\phi} \, d\phi.
\end{gather*}
Computing the last integral, we obtain the recurrence relation
\begin{gather*}
	J_{m}(\theta) = \frac{m-3}{m-1} J_{m-2}(\theta) + \frac{1}{m-2} \frac{\cos{\theta}}{\sin^{m-2}{\theta}} I_{m}(\theta) + \frac{1}{(m-1)(m-2)}\sin^{2}{\theta}.
\end{gather*}

We can simplify the answer by using (\ref{eq:Irec}) to give
\begin{gather*}
	J_{m}(\theta) = \frac{m-3}{m-1} J_{m-2}(\theta) + \frac{1}{m-1} \frac{\cos{\theta}}{\sin^{m-2}{\theta}} I_{m-2}(\theta) + \frac{1}{(m-1)(m-2)},
\end{gather*}
as required.
\end{proof}

We shall now consider another way to obtain this recurrence, which will additionally produce closed-form results.

\subsection{Hypergeometric functions}

\subsubsection[$n$ even: stereographic projection]{$\boldsymbol{n}$ even: stereographic projection}

Note that stereographic projection $S^{n} \to {\mathbb R}^{n}$ can be def\/ined in terms of spherical coordinates $(\theta, \theta_{2}, \dotsc, \theta_{n-1} , \varphi)$ on $S^{n}$ and spherical coordinates $(r, \theta'_{2}, \dotsc, \theta'_{n-1} , \varphi')$ on ${\mathbb R}^{n}$ by
\begin{gather*}
	r = \cot{\tfrac{1}{2} \theta }, \qquad \theta'_{i} = \theta_{i}, \quad 2\leq i \leq n-1, \qquad \varphi'= \varphi.
\end{gather*}

The metric is transformed as
\begin{gather*}
	d\theta^{2} + \sin^{2}{\theta}(d\Sigma_{n-1})^{2} = \frac{4}{(1+r^{2})^{2}} \left( dr^{2} + r^{2} (d\Sigma'_{n-1})^{2} \right),
\end{gather*}
where $ (d\Sigma_{n-1})^{2} $ is the metric on the $(n-1)$-dimensional sphere def\/ined by $(\theta_{2}, \theta_{3}, \dotsc, \theta_{n-1} , \varphi) $, and likewise $ (d\Sigma_{n-1}')^{2} $ in the primed variables. The Laplacian becomes
\begin{align*}
	\Delta{f} &= \frac{1}{\sin^{n-1}{\theta}} \pdbyd{\theta} \left(\sin^{n-1}{\theta} \pdbyd[f]{\theta} \right) + \frac{1}{\sin^{2}{\theta}} \Delta_{\Sigma_{n-1}}{f} \\
	&= \frac{(1+r^{2})^{n}}{4r^{n-1}} \pdbyd{r} \left( \frac{r^{n-1}}{(1+r^{2})^{n-2}} \pdbyd[f]{r} \right) + \frac{1}{r^{2}} \Delta_{\Sigma'_{n-1}}{f},
\end{align*}
where $ \Delta_{\Sigma_{n-1}} $ is the Laplacian restricted to the $(n-1)$-dimensional sphere, and likewise $ \Delta_{\Sigma'_{n-1}} $ in the primed variables.

Therefore in the new coordinates, the Green's function is given by the integral expression
\begin{gather}\label{eq:stereogf}
	G_{n}(r) = 4\frac{R^{2-n}}{S_{n}} \int_{0}^{r} \frac{(1+x^{2})^{n-2}}{x^{n-2}} \left( \int_{0}^{x} \frac{y^{n}}{(1+y^{2})^{n}} \, \frac{dy}{y} \right) \frac{dx}{x}.
\end{gather}

We observe that projection from the antipodal point is just $r' = \tan{\tfrac{1}{2} \theta = 1/r}$. This fact is neatly captured in the $r \mapsto 1/r$ symmetry of the integrands in \eqref{eq:stereogf}. Using this expression, we shall prove

\begin{Theorem}	The Green's function for the $n$-dimensional sphere of radius $R$ is
\begin{gather*}%\label{eq:gfformula}
G_{n}(\theta) = \frac{R^{2-n}}{S_{n}} \left( \frac{n-2}{n(n-1)} r^{2} \, \pFq{3}{2}{1,1,2-n/2}{2,1+n/2}{-r^{2}} + \frac{1}{n-1} \log\big(1+r^{2}\big) \right),
\end{gather*}
where $r=\cot{\frac{1}{2}\theta}$.
\end{Theorem}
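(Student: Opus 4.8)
The plan is to evaluate the iterated integral \eqref{eq:stereogf} from the inside out. First I would compute the inner integral by the substitution $u=y^{2}$, which turns it into $\tfrac12\int_{0}^{x^{2}}u^{n/2-1}(1+u)^{-n}\,du$, and then apply the standard incomplete-Beta evaluation $\int_{0}^{t}u^{a-1}(1+u)^{-b}\,du=\tfrac{t^{a}}{a}\,{}_2F_1(a,b;a+1;-t)$ (itself immediate from the binomial series together with $\tfrac1{a+k}=\tfrac1a\tfrac{(a)_k}{(a+1)_k}$). This gives
\begin{gather*}
\int_{0}^{x}\frac{y^{n-1}}{(1+y^{2})^{n}}\,dy=\frac{x^{n}}{n}\,\pFq{2}{1}{n/2,n}{1+n/2}{-x^{2}}.
\end{gather*}

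Next I would substitute this into the outer integral: the powers $x^{-(n-2)}\cdot x^{n}\cdot x^{-1}$ collapse to a single factor of $x$, leaving $\tfrac{4}{n}\tfrac{R^{2-n}}{S_{n}}\int_{0}^{r}x(1+x^{2})^{n-2}\,{}_2F_1(n/2,n;1+n/2;-x^{2})\,dx$. The key simplification is Euler's transformation ${}_2F_1(a,b;c;z)=(1-z)^{c-a-b}{}_2F_1(c-a,c-b;c;z)$, which converts the awkward prefactor $(1+x^{2})^{n-2}$ into $(1+x^{2})^{-1}$:
\begin{gather*}
(1+x^{2})^{n-2}\,\pFq{2}{1}{n/2,n}{1+n/2}{-x^{2}}=(1+x^{2})^{-1}\,\pFq{2}{1}{1,1-n/2}{1+n/2}{-x^{2}}.
\end{gather*}
After the change of variable $u=x^{2}$, the whole theorem reduces to evaluating the single integral $\tfrac{2}{n}\tfrac{R^{2-n}}{S_{n}}\int_{0}^{r^{2}}(1+u)^{-1}\,{}_2F_1(1,1-n/2;1+n/2;-u)\,du$.

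This last evaluation is the step I expect to be the main obstacle. I would handle it by checking that the proposed closed form and the integral agree at $r=0$ (both vanish) and have equal $r$-derivatives. Writing the ${}_3F_2$ via $\tfrac{(1)_k(1)_k}{(2)_k k!}=\tfrac1{k+1}$, one finds $\tfrac{d}{dr}\big[r^{2}\,{}_3F_2(1,1,2-n/2;2,1+n/2;-r^{2})\big]=2r\,{}_2F_1(1,2-n/2;1+n/2;-r^{2})$, so matching derivatives reduces everything to the single contiguous-type identity
\begin{gather*}
\frac{2}{n}\,\pFq{2}{1}{1,1-n/2}{1+n/2}{z}=\frac{1}{n-1}+\frac{n-2}{n(n-1)}(1-z)\,\pFq{2}{1}{1,2-n/2}{1+n/2}{z},
\end{gather*}
with $z=-r^{2}$. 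I would prove this by comparing coefficients of $z^{k}$: the right-hand side produces $\tfrac{n-2}{n(n-1)}(c_{k}-c_{k-1})$ with $c_{k}=\tfrac{(2-n/2)_k}{(1+n/2)_k}$, and the telescoped difference $c_{k}-c_{k-1}=(1-n)\tfrac{(2-n/2)_{k-1}}{(1+n/2)_k}$ collapses, using $(1-n/2)_k=(1-n/2)(2-n/2)_{k-1}$ and $1-n/2=-\tfrac{n-2}{2}$, to exactly the left-hand coefficient $\tfrac{2}{n}\tfrac{(1-n/2)_k}{(1+n/2)_k}$ (the $k=0$ term, where the constant $\tfrac1{n-1}$ enters, being checked separately).

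Equivalently, and perhaps more transparently, I could expand $(1+u)^{-1}=\sum(-u)^{k}$ and form the Cauchy product with the ${}_2F_1$ series; the inner sum $\sum_{l=0}^{k}\tfrac{(1-n/2)_l}{(1+n/2)_l}$ then has the closed form $\tfrac{n-2}{2(n-1)}\tfrac{(2-n/2)_k}{(1+n/2)_k}+\tfrac{n}{2(n-1)}$ (provable by induction, with $k\to\infty$ limit the Gauss value $\tfrac{n}{2(n-1)}$ of ${}_2F_1(1,1-n/2;1+n/2;1)$), after which term-by-term integration delivers the ${}_3F_2$ and the $\log(1+r^{2})$ simultaneously, with the correct constants $\tfrac{n-2}{n(n-1)}$ and $\tfrac1{n-1}$. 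The only care this route requires is that it converges for $r<1$ and must be extended to all $r$ by analytic continuation; the derivative approach sidesteps this by working with the functions directly. Either way, the real content sits entirely in the one hypergeometric identity above.
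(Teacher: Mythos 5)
Your proposal is correct and follows essentially the same route as the paper: the same Euler/incomplete-beta evaluation of the inner integral, the same Euler transformation collapsing $(1+x^{2})^{n-2}$ to $(1+x^{2})^{-1}$, and your key identity $\tfrac{2}{n}\,{}_{2}F_{1}(1,1-n/2;1+n/2;z)=\tfrac{1}{n-1}+\tfrac{n-2}{n(n-1)}(1-z)\,{}_{2}F_{1}(1,2-n/2;1+n/2;z)$ is exactly the Gauss contiguous relation $(b-a)(1-z)F=(c-a)F(a^{-})-(c-b)F(b^{-})$ with $a=1$, $b=2-n/2$, $c=1+n/2$ that the paper cites, which you instead prove directly by telescoping coefficients. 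The remaining packaging difference---verifying the closed form by matching derivatives (or by the Cauchy-product route) rather than rearranging the integrand and integrating term by term---is immaterial, and your remark about analytic continuation for $r>1$ is sound.
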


\begin{proof}
Consider f\/irst the inner integral in \eqref{eq:stereogf}. This can easily be transformed to one of hypergeometric Euler type\footnote{This integral was not actually discovered by Euler: the f\/irst unmistakeable linking of this integral to the hypergeometric series appears in the doctoral dissertation of P.O.C.~Vorsselman de Heer~\cite[p.~10]{VorsselmanndeHeer1833}. See also \cite{Dutka:1984uq}.} via the substitution $u=(y/x)^{2}$,
\begin{gather*} \int_{0}^{x} \frac{y^{n}}{(1+y^{2})^{n}}  \frac{dy}{y} = x^{n}\int_{0}^{1} u^{n/2-1}\big(1+x^{2}u\big)^{-n} du = \frac{x^{n}}{n} \, \pFq{2}{1}{n,n/2}{1+n/2}{-x^{2}}. \end{gather*}
We now use the Euler transformation\footnote{\cite[Sections~9--10 in particular]{Euler:1801fk}.}
\begin{gather*}  \pFq{2}{1}{a,b}{c}{z} = (1-z)^{c-a-b}\, \pFq{2}{1}{c-a,c-b}{c}{z} \end{gather*}
to write
\begin{gather*} x\big(1+x^{2}\big)^{n-2} \, \pFq{2}{1}{n/2,n}{1+n/2}{-x^{2}} = \frac{x}{1+x^{2}}\, \pFq{2}{1}{1,1-n/2}{1+n/2}{-x^{2}}. \end{gather*}

We can now use the contiguous relationship between $F$, $F(a^{-})$, and $F(b^{-})$ \cite[Section~7, equation~(7)]{Gauss:1813fk}
\begin{gather}\label{eq:ctgsrel}
	(b-a)(1-z)F = (c-a)F(a^{-})-(c-b)F(b^{-}),
\end{gather}
where $F$ is shorthand for ${}_{2}F_{1}(a,b,c;z)$ and $F(a^{+})$ for ${}_{2}F_{1}(a+1,b,c;z)$ etc., to write this as a~linear combination of a polynomial and $x/(1+x^{2})$. Here, $a=1$, $b=2-n/2$ and $c=1+n/2$, so
\begin{gather*} (1-n/2)\big(1+x^{2}\big) \, \pFq{2}{1}{1,2-n/2}{1+n/2}{-x^{2}} = n/2 - (n-1) \, \pFq{2}{1}{1,1-n/2}{1+n/2}{-x^{2}}; \end{gather*}
therefore
\begin{gather*} \frac{x}{n}\frac{1}{1+x^{2}}\, \pFq{2}{1}{1,1-n/2}{1+n/2}{-x^{2}} = \frac{n-2}{n(n-1)} x \, \pFq{2}{1}{1,2-n/2}{1+n/2}{-x^{2}} + \frac{1}{2(n-1)}\frac{x}{1+x^{2}}. \end{gather*}
We can now integrate this directly to immediately obtain the general form of the Green's function,
\begin{align*}
	G_{n}(\theta) &= 4\frac{R^{2-n}}{S_{n}}\int_{0}^{r} \left( \frac{n-2}{n(n-1)} x \, \pFq{2}{1}{1,2-n/2}{1+n/2}{-x^{2}} + \frac{1}{2(n-1)}\frac{x}{1+x^{2}} \right)  dx \\
	&= 4\frac{R^{2-n}}{S_{n}}\left[ \frac{n-2}{2n(n-1)} x^{2} \, \pFq{3}{2}{1,1,2-n/2}{2,1+n/2}{-x^{2}} + \frac{1}{4(n-1)} \log{(1+x^{2})} \right]_{x=0}^{r} \\
	&= \frac{R^{2-n}}{S_{n}} \left( \frac{n-2}{n(n-1)} r^{2} \, \pFq{3}{2}{1,1,2-n/2}{2,1+n/2}{-r^{2}} + \frac{1}{n-1} \log{(1+r^{2})} \right),
\end{align*}
as required.
\end{proof}

\begin{Corollary} Suppose $n=2m \geq 2$ is a positive even integer. Then the Green's function for the $n$-dimensional sphere of radius $R$ is
\begin{gather}\label{eq:evengfformula}
G_{2m}(\theta) = \frac{R^{2-2m}}{S_{2m}} \left( \frac{m-1}{m(2m-1)} r^{2} \, \pFq{3}{2}{1,1,2-m}{2,1+m}{-r^{2}} + \frac{1}{2m-1} \log\big(1+r^{2}\big)\right),
\end{gather}
where $r=\cot{\frac{1}{2}\theta}$.
\end{Corollary}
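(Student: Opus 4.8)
The plan is to obtain this simply by setting $n = 2m$ in the Theorem just proved and simplifying the rational coefficients; no new integration or hypergeometric manipulation is required. First I would substitute $n = 2m$ directly into the formula of the Theorem, noting that $R^{2-n} = R^{2-2m}$ and $S_{n} = S_{2m}$ are immediate, while the parameters of the ${}_{3}F_{2}$ become $2 - n/2 = 2 - m$ and $1 + n/2 = 1 + m$. The prefactor simplifies as
\begin{gather*}
\frac{n-2}{n(n-1)} = \frac{2m-2}{2m(2m-1)} = \frac{m-1}{m(2m-1)},
\end{gather*}
and $\frac{1}{n-1} = \frac{1}{2m-1}$, which reproduces \eqref{eq:evengfformula} verbatim.

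The point worth emphasising---and the reason this deserves separate billing as a corollary---is that for even $n$ the upper parameter $2 - m$ is a non-positive integer whenever $m \geq 2$. By the usual termination criterion for generalised hypergeometric series, the Pochhammer symbol $(2-m)_{k}$ vanishes once $k \geq m - 1$, so the ${}_{3}F_{2}$ truncates to a polynomial in $-r^{2}$ of degree $m - 2$. Hence the even-dimensional Green's function is genuinely a finite sum of powers of $r^{2} = \cot^{2}{\tfrac{1}{2}\theta}$ together with a single logarithmic term, in agreement with the dichotomy announced in the introduction.

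The only case needing a separate glance is $m = 1$ (that is, $n = 2$), where $2 - m = 1$ and the series does not terminate; there, however, the rational coefficient $\frac{m-1}{m(2m-1)}$ is identically zero, so the hypergeometric contribution drops out and only the logarithm survives. I expect no real obstacle in the argument: the substitution is routine, and the single delicate point is simply to record that the non-positive integer parameter forces termination (with the $m=1$ degeneracy handled by the vanishing prefactor).
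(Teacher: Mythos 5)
Your proposal is correct and matches the paper exactly: the Corollary is obtained by direct substitution of $n=2m$ into the Theorem, with the paper's only accompanying comment being precisely your observation that $2-m$ being a non-positive integer forces the ${}_{3}F_{2}$ to terminate as a polynomial (of degree $m-2$ in $r^{2}$). Your extra remark on the $m=1$ degeneracy, where the vanishing coefficient $\frac{m-1}{m(2m-1)}$ kills the non-terminating series and leaves only the logarithm, is a careful touch the paper leaves implicit but is fully consistent with its later use of $G_{2}(\theta)=\frac{1}{4\pi}\log\big(1+r^{2}\big)$.
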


In particular, notice that since $ 2-m $ is a negative integer, the hypergeometric term is explicitly a polynomial.

To verify the local behaviour, we extract from \eqref{eq:evengfformula} the most singular term as $1/r \to 0$. Firstly, we have $ r = \cot{\frac{1}{2}\theta} \sim 2/\theta$ as $\theta \to 0$. As discussed in Section~\ref{sec:ccc}, the Green's function should act like the fundamental solution for f\/lat $n$-dimensional space near $1/r=0$, \emph{viz.},
\begin{gather*}
	-\frac{1}{2\pi}\log{s}, \quad  n=2, \qquad \frac{1}{(n-2)S_{n-1}}s^{2-n}, \quad  n\neq 2,
\end{gather*}
where $S_{n-1}=2\pi^{n/2}/\Gamma(n/2)$ is the volume of the $(n-1)$-dimensional unit hypersphere, and we recall that $s$ is the arc-length, given by $s=\abs{x}$ in $ {\mathbb R}^{n} $, and $ s=R\theta $ on hyperspheres~$S^{n}$.

For $ S^{2} $ (i.e., $m=1$), we f\/ind that
\begin{gather*} G_{2} \sim \frac{-2}{4\pi (2\cdot 1-1)} \log{r^{-1}} \sim -\frac{1}{2\pi}\log{s}, \end{gather*}
which is consistent with Section~\ref{sec:ccc}.

For $m \geq 2$, the most singular term is the largest power of $r$ in the hypergeometric polynomial, which has degree $m-2$, and can be simplif\/ied as follows:
\begin{gather*}
\frac{R^{2-2m}}{S_{2m}} \frac{m-1}{m(2m-1)} \frac{(1)_{m-2}(1)_{m-2}(2-m)_{m-2}}{(2)_{m-2}(1+m)_{m-2}}\frac{(-1)^{m}r^{2(m-2)-2}}{(m-2)!} \\
\qquad{} = \frac{\Gamma(m+1/2)}{2\pi^{m+1/2}}\frac{2 \Gamma(m-1) \Gamma(m+1)}{\Gamma(2m+1)} \left(\frac{r}{R}\right)^{2m-2} \\
\qquad{} \sim \frac{\Gamma(m-1)}{\pi^{m}} \frac{m\Gamma(m)\Gamma(m+1/2)}{\sqrt{\pi}2m\Gamma(2m)} \frac{2^{2m-2}}{(R\theta)^{2m-2}} \\
\qquad{} = \frac{\Gamma(m-1)}{\pi^{m}} \frac{1}{2^{2m}} \frac{2^{2m-2}}{(R\theta)^{2m-2}}
= \frac{\Gamma(m)}{(2m-2)2\pi^{m}} s^{2-2m} = \frac{1}{(n-2)S_{n-1}}s^{2-n},
\end{gather*}
using the duplication formula for the $\Gamma$-function; this is also in agreement with Section~\ref{sec:ccc}.

\begin{Remark} We can verify that some simple cases are in agreement with the corresponding results in~\cite{Szmytkowski:2007gf}. Their notation dif\/fers somewhat from ours, and they have the opposite sign convention for the Laplacian (i.e.,~$\Delta$ instead of~$-\Delta$). Specif\/ically, their even-dimensional Green's function for the Laplace case (corresponding to $L=0$ in their paper) is given in their paper's equation~(4.41) by
\begin{gather}
\bar{G}_{0}^{(2n+2)}(\bn,\bn') = \frac{1}{(2n+1)S_{2n+2}} \log{\frac{1-\bn \cdot \bn'}{2}} \notag \\
\hphantom{\bar{G}_{0}^{(2n+2)}(\bn,\bn') =}{} -\frac{n!}{(2n+1)!! S_{2n+2}} \sum_{k=1}^{n} \frac{(2n-2k-1)!!}{k(n-k)!} \frac{C_{k}^{(n-k+1/2)}(\bn \cdot \bn')}{(1-\bn\cdot\bn')^{k}} + \text{const},\label{eq:szmytkowskievengf}
\end{gather}
where $ n $ is our $m-1$, and
\begin{gather*}
C_{\lambda}^{(\alpha)}(x) = \frac{\Gamma(\lambda+2\alpha)}{\Gamma(\lambda+1)\Gamma(2\alpha)} \pFq{2}{1}{-\lambda,\lambda+2\alpha}{\alpha+1/2}{\frac{1-x}{2}}
\end{gather*}
	is the Gegenbauer function (of the f\/irst kind), which for $\lambda \in {\mathbb Z}$ is obviously a polynomial; we have also used that $ C_{0}^{(\alpha)}(x)=1 $ for $ \alpha \neq 0 $ to simplify their expression into~\eqref{eq:szmytkowskievengf}. It is easy to see that $ \bn \cdot \bn' = \cos{\theta} $ in our notation. Putting $n=0$ in~\eqref{eq:szmytkowskievengf} gives the two-dimensional case,
\begin{gather}\label{eq:szmytkowski2}
\bar{G}_{0}^{(2)} = \frac{1}{S_{2}}\log{\frac{1-\cos{\theta}}{2}} + \text{const},
\end{gather}
	whereas if we put $m=1$ in our expression \eqref{eq:evengfformula}, we obtain
\begin{gather*} G_{2}(\theta) = \frac{1}{4\pi}\log{(1+\cot^{2}{\tfrac{1}{2}\theta})} = \frac{1}{2\pi}\log{\csc{\tfrac{1}{2}\theta}} = -\frac{1}{4\pi} \log{\frac{1-\cos{\theta}}{2}}, \end{gather*}
which agrees with \eqref{eq:szmytkowski2} when we take the opposite sign (to agree with the dif\/ference in the Laplacians).
	
Similarly, taking $n=1$ in \eqref{eq:szmytkowskievengf} gives the four-dimensional case,
\begin{align*}
		\bar{G}_{0}^{(4)}(\bn,\bn') &= \frac{1}{3S_{4}} \log{\frac{1-\cos{\theta}}{2}} -\frac{1}{3 S_{4}} \frac{1}{1} \frac{C_{1}^{(1/2)}(\cos{\theta})}{1-\cos{\theta}} + \text{const} \\
		&= \frac{1}{3S_{4}} \left( \log{\frac{1-\cos{\theta}}{2}}-\frac{\cos{\theta}}{1-\cos{\theta}} \right) + \text{const} \\
		&= -\frac{1}{3S_{4}} \left( -\log{\frac{1-\cos{\theta}}{2}}+\frac{1}{2}\cot^{2}{\tfrac{1}{2}\theta} \right) + \text{const},
\end{align*}
	where the constants dif\/fer between the last two lines (essentially we have used $ 1-\cos{\theta}=2\sin^{2}{\frac{1}{2}\theta} $ again). Our corresponding result is
	\begin{align*}
		G_{4}(\theta) = \frac{1}{3S_{4}} \left( -\log{\frac{1-\cos{\theta}}{2}}+ \frac{1}{2}\cot^{2}{\tfrac{1}{2}\theta} \right),
	\end{align*}
	which is once again in agreement when we take the opposite sign convention.
\end{Remark}

\subsubsection[$n$ odd: a recurrence relation]{$\boldsymbol{n}$ odd: a recurrence relation}\label{sec:nodd}

Unfortunately the simple hypergeometric polynomial formula found in the previous section does not extend to the odd case. Instead the Green's function has a quite dif\/ferent character, as we shall see in Theorem~\ref{thm:odddimlgfformula}.

The discussion will be carried out for $\pi/2<\theta<\pi$ initially, then the end result will be obtained by analytic continuation.

\begin{Lemma}[reduction of $J_{2m+1}$ to hypergeometric form]
	Let $\pi/2<\theta<\pi$. With $ J_{n}(\theta)$ as defined in~\eqref{eq:ImJmdefs}, we have
	\begin{gather*} J_{2m+1}(\theta) = \frac{1}{4m} \left( \log{(1+T)} - \frac{T}{2m+1} \, \pFq{3}{2}{1,1,3/2}{2,m+3/2}{-T} \right), \end{gather*}
	where $ T=\tan^{2}{\theta} $.
\end{Lemma}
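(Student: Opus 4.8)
The plan is to reproduce, for the odd case, the template of the even-dimensional proof: reduce the inner integral to a ${}_2F_1$ of Euler type, change variables so the whole of $J_{2m+1}$ becomes a single hypergeometric integral in $T=\tan^2\theta$, and then use an Euler transformation followed by the Gauss contiguous relation \eqref{eq:ctgsrel} to separate a logarithmic term from a term that integrates to the required ${}_3F_2$. Working on $(\pi/2,\pi)$ throughout keeps the substitution $t=\tan\psi$ away from its singularity at $\pi/2$. First I would evaluate $I_{2m+1}(\phi)=\int_\phi^\pi \sin^{2m}\psi\,d\psi$: reflecting $\psi\mapsto\pi-\psi$ gives $\int_0^{\pi-\phi}\sin^{2m}\alpha\,d\alpha$, and then $t=\tan\alpha$ followed by $u=t^2$ produces $\tfrac12\int_0^{V}u^{m-1/2}(1+u)^{-(m+1)}\,du$ with $V=\tan^2\phi$. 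This is an Euler integral, so
\[ I_{2m+1}(\phi)=\frac{V^{m+1/2}}{2m+1}\,\pFq{2}{1}{m+1,m+1/2}{m+3/2}{-V}, \qquad V=\tan^2\phi. \]

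Next I would substitute this into $J_{2m+1}(\theta)=\int_\theta^\pi I_{2m+1}(\phi)\csc^{2m}\phi\,d\phi$ and change variables from $\phi$ to $V$. Using $\sin^{2m}\phi=V^{m}/(1+V)^{m}$ and $d\phi=-dV/\big(2V^{1/2}(1+V)\big)$ (valid since $\tan\phi=-V^{1/2}<0$ on $(\pi/2,\pi)$), with the limits $\phi=\theta\mapsto V=T$ and $\phi=\pi\mapsto V=0$, the powers of $V^{1/2}$ cancel and I obtain
\[ J_{2m+1}(\theta)=\frac{1}{2(2m+1)}\int_0^T (1+V)^{m-1}\,\pFq{2}{1}{m+1,m+1/2}{m+3/2}{-V}\,dV. \]
The Euler transformation $\pFq{2}{1}{a,b}{c}{z}=(1-z)^{c-a-b}\pFq{2}{1}{c-a,c-b}{c}{z}$, here with $c-a-b=-m$, collapses $(1+V)^{m-1}\pFq{2}{1}{m+1,m+1/2}{m+3/2}{-V}$ to $(1+V)^{-1}\pFq{2}{1}{1,1/2}{m+3/2}{-V}$, leaving the single clean integral $\tfrac{1}{2(2m+1)}\int_0^T (1+V)^{-1}\pFq{2}{1}{1,1/2}{m+3/2}{-V}\,dV$.

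Finally I would split the integrand using the contiguous relation \eqref{eq:ctgsrel} with $a=1$, $b=3/2$, $c=m+3/2$. Since $\pFq{2}{1}{0,3/2}{m+3/2}{-V}=1$, relation \eqref{eq:ctgsrel} reads $\tfrac12(1+V)\pFq{2}{1}{1,3/2}{m+3/2}{-V}=\big(m+\tfrac12\big)-m\,\pFq{2}{1}{1,1/2}{m+3/2}{-V}$, which rearranges to
\[ \frac{1}{1+V}\,\pFq{2}{1}{1,1/2}{m+3/2}{-V}=\frac{2m+1}{2m}\,\frac{1}{1+V}-\frac{1}{2m}\,\pFq{2}{1}{1,3/2}{m+3/2}{-V}. \]
The first term integrates to $\tfrac{2m+1}{2m}\log(1+T)$; the second is handled by the elementary identity $\int_0^T \pFq{2}{1}{1,b}{c}{-V}\,dV=T\,\pFq{3}{2}{1,1,b}{2,c}{-T}$ (which follows from $1/(k+1)=(1)_k/(2)_k$), giving $-\tfrac{1}{2m}T\,\pFq{3}{2}{1,1,3/2}{2,m+3/2}{-T}$. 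Collecting the two pieces and multiplying by the prefactor $\tfrac{1}{2(2m+1)}$ yields exactly the claimed formula.

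The main obstacle is not any single calculation but keeping the bookkeeping of the two hypergeometric identities exactly right: the Euler transformation must reduce the power of $(1+V)$ to precisely $-1$ (this is what makes the logarithm appear), and the contiguous relation must be applied with the parameter choice $(a,b,c)=(1,3/2,m+3/2)$ that produces the constants $\tfrac{2m+1}{2m}$ and $-\tfrac{1}{2m}$. I would also need to justify the series manipulations, since the hypergeometric series converge only for $T<1$ whereas $T=\tan^2\theta$ ranges over $(0,\infty)$; because all arguments are negative, the functions (equivalently their Euler integral representations) are analytic on $T\in(0,\infty)$, so the identities hold there by analytic continuation, consistent with the eventual passage to all $\theta$.
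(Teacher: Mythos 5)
Your proposal is correct and takes essentially the same route as the paper: the reduction of $I_{2m+1}$ to an Euler integral giving ${}_{2}F_{1}\big(m+1,m+\tfrac12;m+\tfrac32;-\tan^{2}\phi\big)$, the Euler transformation collapsing the integrand of $J_{2m+1}$ to $(1+V)^{-1}{}_{2}F_{1}\big(1,\tfrac12;m+\tfrac32;-V\big)$, and the contiguous relation \eqref{eq:ctgsrel} with $(a,b,c)=(1,3/2,m+3/2)$ splitting off the logarithm from the term that integrates to the ${}_{3}F_{2}$. The only differences are cosmetic — you reflect $\psi\mapsto\pi-\psi$ where the paper substitutes $u=\tan^{2}\psi/\tan^{2}\phi$, you apply the Euler transformation after rather than before inserting $I_{2m+1}$ into $J_{2m+1}$, and you integrate in $V=\tan^{2}\phi$ over $(0,T)$ instead of the paper's normalised $v\in(0,1)$ — and your explicit remark on analytic continuation for $T\geq 1$ is a point the paper handles only implicitly.
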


\begin{proof}
We shall make the substitution
\begin{gather*} u = \frac{\tan^{2}{\psi}}{\tan^{2}{\phi}}, \qquad d\psi = \frac{du}{2u^{1/2}\tan{\phi}(1+u \tan^{2}{\phi})}, \end{gather*}
in the def\/initions of $I_{2m+1}$ and $ J_{2m+1} $ from \eqref{eq:ImJmdefs}. This also gives
\begin{gather*} \sin{\psi} = u^{1/2}\tan{\phi}\big(1+u\tan^{2}{\phi}\big)^{-1/2}. \end{gather*}
Then, returning to the notation of Section~\ref{sec:rr} and taking a general odd $ n=2m+1 $,
\begin{gather*} I_{2m+1}(\phi) = \int_{\phi}^{\pi} \sin^{2m}{\psi} \, d\psi = -\frac{1}{2}\tan^{2m+1}{\phi} \int_{0}^{1} u^{m-1/2}\big(1+u\tan^{2}{\phi}\big)^{-m-1/2} \, du, \end{gather*}
and the usual Euler integral produces
\begin{gather*} I_{2m+1}(\phi) = -\frac{\tan^{2m+1}{\phi}}{2m+1} \, \pFq{2}{1}{m+1/2,m+1}{m+3/2}{-\tan^{2}{\phi}}, \end{gather*}
which can be converted to
\begin{align*}
	I_{2m+1}(\phi) &= -\frac{1}{2m+1} \tan{\phi}\frac{\sin^{2m}{\phi}}{\cos^{2m}{\phi}} \frac{1}{(1+\tan^{2}{\phi})^{m}} \, \pFq{2}{1}{1,1/2}{m+3/2}{-\tan^{2}{\phi}} \\
	&= -\frac{1}{2m+1} \tan{\phi}\sin^{2m}{\phi} \, \pFq{2}{1}{1,1/2}{m+3/2}{-\tan^{2}{\phi}}
\end{align*}
using an Euler transformation.

We then have
\begin{gather*}
	J_{2m+1}(\theta) = -\frac{1}{2m+1}\int_{\theta}^{\pi} \tan{\phi} \, \pFq{2}{1}{1,1/2}{m+3/2}{-\tan^{2}{\phi}} \, d\phi.
\end{gather*}
Substituting
\begin{gather*} v = \frac{\tan^{2}{\phi}}{\tan^{2}{\theta}}=:T^{-1}\tan^{2}{\phi}, \qquad \tan{\phi} \, d\phi = \frac{dv}{2(1+Tv)}, \end{gather*}
the integral becomes
\begin{gather*} J_{m}(\theta) = \frac{1}{2(2m+1)}\int_{0}^{1} \frac{T}{1+Tv} \, \pFq{2}{1}{1,1/2}{m+3/2}{-Tv} \, dv. \end{gather*}
The same contiguous relation as before (see \eqref{eq:ctgsrel} above) allows us to write this as
\begin{gather*} J_{2m+1}(\theta) = \frac{T}{4m}\int_{0}^{1} \left( \frac{1}{1+Tv} - \frac{1}{2m+1} \, \pFq{2}{1}{1,3/2}{m+3/2}{-Tv} \right) dv, \end{gather*}
which integrates immediately to
\begin{gather*} J_{2m+1}(\theta) = \frac{1}{4m} \left( \log{(1+T)} - \frac{T}{2m+1} \, \pFq{3}{2}{1,1,3/2}{2,m+3/2}{-T} \right), \end{gather*}
as required.
\end{proof}

This lemma provides a form of the Green's function, but rather an unilluminating one: indeed, it is not even obvious that there is a f\/inite form to the expansion. We shall now derive the following explicitly f\/inite form.

\begin{Theorem}\label{thm:odddimlgfformula}
Suppose that $n=2m+1 \geq 3$ is a positive odd integer. Then the Green's function for the sphere of dimension $2m+1$ is
\begin{gather*}
G_{2m+1}(\theta) = \frac{R^{1-2m}}{S_{2m+1}}\frac{1}{2m} \sum_{k=0}^{m-1} \left( \binom{k-1/2}{k}y^{k} \left( 1+(\pi-\theta)\cot{\theta} \right) - \frac{1}{3}\sum_{l=1}^{k} \frac{\binom{k-1/2}{k}}{\binom{l+1/2}{l-1}}y^{k-l} \right),
	\end{gather*}
	where $y=\csc^{2}{\theta}$.
\end{Theorem}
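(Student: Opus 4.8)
The plan is to sidestep the hypergeometric form of the Lemma and instead solve a recurrence in $m$, which delivers the nested finite sum directly. Since $G_{2m+1}(\theta)=\frac{R^{1-2m}}{S_{2m+1}}J_{2m+1}(\theta)$, it suffices to prove the bracketed identity for $J_{2m+1}$. My starting point is the recurrence for $J_m$ established in Section~\ref{sec:rr}, specialised to the odd index $m\mapsto 2m+1$:
\[ J_{2m+1}(\theta)=\frac{2m-2}{2m}J_{2m-1}(\theta)+\frac{1}{2m}\frac{\cos\theta}{\sin^{2m-1}\theta}I_{2m-1}(\theta)+\frac{1}{2m(2m-1)}. \]
The crucial simplification is to clear denominators by setting $\tilde J_m:=2mJ_{2m+1}(\theta)$, whereupon the recurrence collapses to the near-telescoping form
\[ \tilde J_m=\tilde J_{m-1}+\frac{\cos\theta}{\sin^{2m-1}\theta}I_{2m-1}(\theta)+\frac{1}{2m-1}, \]
so that $\tilde J_m=\tilde J_1+\sum_{j=2}^{m}\bigl(\frac{\cos\theta}{\sin^{2j-1}\theta}I_{2j-1}(\theta)+\frac{1}{2j-1}\bigr)$.

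For the base case I would compute $\tilde J_1=2J_3$ directly from the definition~\eqref{eq:ImJmdefs}. Writing $I_3(\phi)=\tfrac12(\pi-\phi)+\tfrac12\sin\phi\cos\phi$ and integrating by parts, the separately divergent $\log\sin\phi$ contributions cancel at $\phi=\pi$, leaving $\tilde J_1=1+(\pi-\theta)\cot\theta$; this is also the $m=1$ instance of the claimed formula and shows concretely where the finiteness originates.

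To evaluate the sum I next solve the recurrence~\eqref{eq:Irec} for odd index. Iterating from $I_1=\pi-\theta$ gives $I_{2k+1}(\theta)=\binom{k-1/2}{k}(\pi-\theta)+\cos\theta\sum_{i=1}^{k}c_{k,i}\sin^{2i-1}\theta$, where $c_{k,i}=\frac{1}{2i}\binom{k-1/2}{k}\big/\binom{i-1/2}{i}$. Substituting $I_{2m-1}$ (the case $k=m-1$) into the summed recurrence, the $(\pi-\theta)$-part, multiplied by $\cos\theta/\sin^{2m-1}\theta$, contributes exactly $\binom{k-1/2}{k}(\pi-\theta)\cot\theta\,y^{k}$ with $y=\csc^2\theta$, while the trigonometric-polynomial part, multiplied by $\cos^2\theta$ and reduced by $\cos^2\theta\,\csc^{2p}\theta=y^{p}-y^{p-1}$, produces a polynomial in $y$ alone. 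This already exhibits the split into the odd-in-$\cot\theta$ piece $(\pi-\theta)\cot\theta$ and the even-in-$\cot\theta$ piece (a polynomial in $y$) predicted in the introduction.

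The main obstacle is the coefficient bookkeeping in the even piece: I must show that the constant $1$ from $\tilde J_1$, the constants $\frac{1}{2j-1}$, and the telescoped $y^{p}-y^{p-1}$ terms reassemble into precisely $\sum_k\binom{k-1/2}{k}y^k$ together with the correction $-\tfrac13\sum_{k}\sum_{l=1}^k\binom{k-1/2}{k}\big/\binom{l+1/2}{l-1}\,y^{k-l}$. The anomalous factor $\tfrac13$ and the shifted lower index in $\binom{l+1/2}{l-1}$ should emerge from the identity $\frac{1}{2i\binom{i-1/2}{i}}=(2i+1)\cdot\frac{1}{3\binom{i+1/2}{i-1}}$ combined with the cancellation between adjacent powers of $y$ produced by $\cos^2\theta\,\csc^{2p}\theta=y^p-y^{p-1}$. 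Before grinding through the general index algebra I would confirm the $m=2$ case by hand, where one finds $\tilde J_2=\tfrac56+\tfrac12 y+(\pi-\theta)\cot\theta\,(1+\tfrac12 y)$ by both routes, fixing all conventions. The Lemma's $\pFq{3}{2}{1,1,3/2}{2,m+3/2}{-T}$ form could in principle be used instead, but extracting a finite expression from that infinite series, whose hidden logarithm must cancel the explicit $\log(1+T)$, is markedly less transparent than the recurrence.
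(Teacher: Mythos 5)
Your proposal is correct, and it takes a genuinely different route from the paper. The paper's own proof goes through the hypergeometric Lemma: it first reduces $J_{2m+1}$ to the ${}_{3}F_{2}$ form (valid initially only for $\pi/2<\theta<\pi$), telescopes in $m$ via Rainville's ${}_{3}F_{2}$ contiguous relation, solves the resulting first-order recurrence \eqref{eq:2ndrecur} for the ${}_{2}F_{1}$ increments by the homogeneous/particular split $B(k)=\alpha(k)\beta(k)$, and then must cancel the hidden logarithm against $A(0)=-\tfrac{1}{2}\log(1+T)$ and fix the branch $\arctan(\tan\theta)=\theta-\pi$ by analytic continuation. You bypass all of this: your substitution $\tilde{J}_{m}=2mJ_{2m+1}$ does collapse the Section~\ref{sec:rr} recurrence to a telescoping sum (the factor $\frac{2m-2}{2m}$ becomes exactly the ratio $\tilde{J}_{m-1}/\bigl(2(m-1)J_{2m-1}\bigr)=1$); your base case $\tilde{J}_{1}=1+(\pi-\theta)\cot\theta$ is right, including the cancellation of the separately divergent $\log\sin\phi$ terms at $\phi=\pi$; and iterating \eqref{eq:Irec} does give $I_{2k+1}(\theta)=\binom{k-1/2}{k}(\pi-\theta)+\cos\theta\sum_{i=1}^{k}c_{k,i}\sin^{2i-1}\theta$ with $c_{k,i}=\frac{1}{2i}\binom{k-1/2}{k}\big/\binom{i-1/2}{i}$. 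The bookkeeping you defer genuinely closes: in the increment $\frac{\cos\theta}{\sin^{2m-1}\theta}I_{2m-1}+\frac{1}{2m-1}$, the top coefficient $c_{m-1,1}=\binom{m-3/2}{m-1}$ supplies the $y^{m-1}$ term; for $1\leq l\leq m-2$ the coefficient of $y^{m-1-l}$ is $c_{m-1,l+1}-c_{m-1,l}=-\binom{m-3/2}{m-1}\,(2l-2)!!/(2l+1)!!$, which equals $-\tfrac{1}{3}\binom{m-3/2}{m-1}\big/\binom{l+1/2}{l-1}$ since $\binom{l+1/2}{l-1}=(2l+1)!!/\bigl(3\,(2l-2)!!\bigr)$ (your stated identity in the form $\frac{1}{2i\binom{i-1/2}{i}}=\frac{2i+1}{3\binom{i+1/2}{i-1}}$); and the constant term $-c_{m-1,m-1}+\frac{1}{2m-1}=-\frac{1}{2(m-1)(2m-1)}$ matches the $l=m-1$ term of the inner sum. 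Your $m=2$ check also agrees with both routes. What your approach buys: it is elementary (no contiguous relations beyond \eqref{eq:Irec}), manifestly finitary at every stage, and valid on all of $(0,\pi)$ at once, so the paper's restriction to $\pi/2<\theta<\pi$, the hidden-logarithm cancellation, and the $\arctan$ branch selection all disappear. What the paper's route buys: the intermediate ${}_{3}F_{2}$ representation of the Lemma itself, which parallels the even-dimensional stereographic derivation and connects to the appendix classification of ${}_{2}F_{1}$'s with half-integer parameters --- structure your method never exhibits.
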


Hence the solution can be expressed as an even polynomial in $ \cot{\theta} $, added to an odd polynomial in $ \cot{\theta} $ multiplied by $ \pi-\theta $.

\begin{proof}
We shall employ an argument using reduction formulae, based on the contiguous relations for hypergeometric functions.

We consider the hypergeometric expression in $ J_{2m+1} $ derived in the previous lemma: write
\begin{gather*}
	A(m) := -\frac{T}{2m+1} \, \pFq{3}{2}{1,1,3/2}{2,m+3/2}{-T}.
\end{gather*}
By the ${}_{3}F_{2}$ contiguous relation
\begin{gather*} (b-a-1)F = (b-1)F(b^{-})-aF(a^{+}), \end{gather*}
where $ F=F(a,\dotsc;b,\dotsc;z) $ \cite[equation~(15)]{Rainville:1945uq},
we know $A$ satisf\/ies
\begin{gather*}
	A(m)-A(m-1) = \frac{2T}{(2m+1)(2m-1)} \pFq{2}{1}{1,3/2}{m+3/2}{-T}.
\end{gather*}

Hence we can write down $A(m)$ as a sum of hypergeometric functions:
\begin{gather}\label{eq:1strecur}
	A(m) = A(0) + \sum_{k=0}^{m-1} \frac{2T}{(2k+3)(2k+1)} \pFq{2}{1}{1,3/2}{k+5/2}{-T}
\end{gather}
and we also f\/ind that
\begin{gather*}
	A(0) = -\frac{1}{2}\log{(1+T)},
\end{gather*}
which cancels the other logarithm term, conf\/irming our suspicion of the initial solution.
This is still not obviously f\/initary. Now consider the term in the sum
\begin{gather*}
	B(k) := \frac{2T}{(2k+3)(2k+1)} \, \pFq{2}{1}{1,3/2}{k+5/2}{-T}.
\end{gather*}
We have the other contiguous relation for a hypergeometric function~\cite[equation~(9)]{Gauss:1813fk}
\begin{gather*} (1-a + (c-b-1)z)F = (c-a)F(a^{-}) -(c-1)(1-z)F(c^{-}), \end{gather*}
so after rearranging, $ B(k) $ satisf\/ies
\begin{gather}\label{eq:2ndrecur}
	B(k) = -\frac{2}{(2k+1)(2k)}+ y\frac{2k-1}{2k} B(k-1),
\end{gather}
where we have written $ y=1+1/T=\csc^{2}(\theta) $. We also have
\begin{gather*}
	B(0) = 2\left( 1-\frac{\arctan{\sqrt{T}}}{\sqrt{T}} \right).
\end{gather*}
Now split $B(k)$ as $ B(k)=\alpha(k)\beta(k) $, where $ \alpha(0)=B(0)$, and $\alpha(k)$ solves the homogeneous version of \eqref{eq:2ndrecur}:
\begin{gather*}
	\alpha(k) = \frac{2k+1}{2k}y \alpha(k-1).
\end{gather*}
We can show that
\begin{gather*}
	\alpha(k) = y^{k}\binom{k-1/2}{k} B(0)
\end{gather*}
(for example, by checking the recurrence relation and initial conditions). Now, $\beta$ satisf\/ies the recurrence relation
\begin{gather*}
	\alpha(k)\beta(k) = -\frac{2}{(2k+1)(2k)}+y\frac{2k-1}{2k}\alpha(k-1)\beta(k-1) = -\frac{2}{(2k+1)(2k)}+\alpha(k)\beta(k-1),
\end{gather*}
and hence we obtain the simple recurrence for $\beta$:
\begin{gather*}
	\beta(k)-\beta(k-1)=-\frac{2}{(2k+1)(2k)\alpha(k)}.
\end{gather*}
Then, since $\beta(0)=1$, we obtain
\begin{gather*}
	\beta(k) = 1-\sum_{l=1}^{k} \frac{2}{(2l+1)(2l)\alpha(l)},
\end{gather*}
and since
\begin{gather*}
	(2l+1)(2l)\binom{l-1/2}{l} = 3\binom{l+1/2}{l-1},
\end{gather*}
we f\/ind that
\begin{gather*}
	B(k) = \alpha(k)-\!\sum_{l=1}^{k}\! \frac{2\alpha(k)}{(2l+1)(2l)\alpha(l)} = 2\binom{k\!-\!1/2}{k} y^{k} \!\left( \! 1-\frac{\arctan{\sqrt{T}}}{\sqrt{T}} \!\right)\! - \frac{2}{3}\!\sum_{l=1}^{k}\! \frac{\binom{k-1/2}{k}}{\binom{l+1/2}{l-1}}y^{k-l} .
\end{gather*}
Now we can solve \eqref{eq:1strecur} to obtain
\begin{gather*}
	A(m) = -\frac{1}{2}\log{(1+T)} + 2\sum_{k=0}^{m-1} \left( \binom{k-1/2}{k}y^{k} \left( 1-\frac{\arctan{\sqrt{T}}}{\sqrt{T}} \right) - \frac{1}{3}\sum_{l=1}^{k} \frac{\binom{k-1/2}{k}}{\binom{l+1/2}{l-1}}y^{k-l} \right).
\end{gather*}

One can see from the expansion near $ \theta=\pi $ that the correct branch of $\arctan{(\tan{\theta})}$ to choose is $ \theta-\pi $, so we can substitute back to f\/ind that
\begin{gather*}
	J_{2m+1}(\theta) = \frac{1}{2m} \sum_{k=0}^{m-1} \left( \binom{k-1/2}{k}y^{k} \left( 1+(\pi-\theta)\cot{\theta} \right) - \frac{1}{3}\sum_{l=1}^{k} \frac{\binom{k-1/2}{k}}{\binom{l+1/2}{l-1}}y^{k-l} \right).
\end{gather*}

Hence,
\begin{gather*}
	G_{2m+1}(\theta) = \frac{R^{1-2m}}{S_{2m+1}}\frac{1}{2m} \sum_{k=0}^{m-1} \left( \binom{k-1/2}{k}y^{k} \left( 1+(\pi-\theta)\cot{\theta} \right) - \frac{1}{3}\sum_{l=1}^{k} \frac{\binom{k-1/2}{k}}{\binom{l+1/2}{l-1}}y^{k-l} \right),
\end{gather*}
as required.
\end{proof}

The most singular term in $G_{2m+1}(\theta)$ is
\begin{gather*}
	\frac{R^{1-2m}}{S_{2m+1}}\frac{1}{2m} \binom{m-3/2}{m-1} \pi\csc^{2m-2}{\theta}\cot{\theta} \sim \frac{1}{S_{2m+1}}\frac{1}{2m} \binom{m-3/2}{m-1} \pi \frac{1}{(R\theta)^{2m-1}},
\end{gather*}
and then the coef\/f\/icient of $ (R\theta)^{2-(2m+1)} $ is
\begin{align*}
	\frac{1}{S_{2m+1}}\frac{1}{2m} \binom{m-3/2}{m-1} \pi &= \frac{\Gamma(m+1)}{2\pi^{m+1}} \frac{\Gamma(m-1/2)\pi}{2m\Gamma(m)\Gamma(1/2)} \\
	&= \frac{\Gamma(m-1/2)}{4\pi^{m+1/2}} = \frac{\Gamma(m+1/2)}{(2m-1)2\pi^{m+1/2}} = \frac{1}{((2m+1)-2)S_{2m}},
\end{align*}
as it should be to agree with the f\/lat-space results.

\begin{Remark}
As in the even-dimensional case, we can compare this with the result in~\cite{Szmytkowski:2007gf}. The three-dimensional case is given in that paper's equation~(4.23):
\begin{gather*}
\bar{G}_{0}^{(3)}(\bn,\bn') = -\frac{1}{4\pi^{2}} \frac{\bn \cdot \bn'}{\sqrt{1-\bn \cdot \bn'}} \arccos{(-\bn \cdot \bn')} + \text{const}.
\end{gather*}
Since $ \arccos{(-\bn \cdot \bn')}=\pi-\theta $, substituting in gives
\begin{gather}	\label{eq:szmytkowski3gf}
\bar{G}_{0}^{(3)}(\bn,\bn') = -\frac{1}{4\pi^{2}}(\pi-\theta)\frac{\cos{\theta}}{\abs{\sin{\theta}}}+\text{const} = -\frac{1}{4\pi^{2}}(\pi-\theta)\cot{\theta}+\text{const},
\end{gather}
since $ \sin{\theta}\geq 0$ for $0\leq\theta\leq \pi$.
	
	Whereas, if we put $m=1$ (and $R=1$) in our result we f\/ind that
\begin{gather*}
G_{3}(\theta) = \frac{1}{2S_{3}} \binom{-1/2}{0} y^{0} (1+(\pi-\theta)\cot{\theta}) - 0 = \frac{1}{4\pi^{2}} \left( (\pi-\theta)\cot{\theta}+1 \right),
\end{gather*}
which agrees with \eqref{eq:szmytkowski3gf} when the sign convention dif\/ference is understood.
	
A similar calculation using \cite[equation~(4.34)]{Szmytkowski:2007gf} shows that the f\/ive-dimensional results also agree up to a constant, ours being
\begin{gather*}
G_{5}(\theta) = \frac{1}{S_{5}} \left( \frac{1}{8}\big(2+\csc^{2}{\theta}\big)(\pi-\theta)\cot{\theta} + \frac{1}{24}\big(5+3\csc^{2}{\theta}\big) \right).
\end{gather*}
\end{Remark}

\section{Applications}

\subsection{The dipole potential on a sphere}

In Euclidean space, the conventional way to def\/ine a dipole potential from a fundamental solution is by considering two point charges with charge $q$ and $-q$ at $a$ and $P(d)$ a distance $d$ apart. We then take the limit $ q\to \infty $, $ d \downarrow 0$, so that $qd$ is constant, and the path has a well-def\/ined tangent vector $p$ when $d$ becomes zero. To understand this on the sphere, it is advantageous to make this a little more formal:

\begin{Definition}
Let $\gamma\colon (-\varepsilon,\varepsilon) \to M $ be a path on the manifold $M$, with $ \gamma(0)=a$ and $ \dot{\gamma}(0) = p \in T_{a}(M) $. Let $ G(x,y)$ be the Green's function for the Laplace--Beltrami operator on~$M$. The \emph{dipole potential at~$a$ with dipole moment~$p$} is given by
\begin{gather*}
H(x,a,p) = \lim_{t \to 0} \frac{G(x,\gamma(t))-G(x,a)}{t} = p \cdot \nabla_{a} G(x,a).
\end{gather*}
\end{Definition}

The limit def\/inition indicates that this describes two point charges of charge $ \pm t^{-1}$ being brought together; the second equality shows that $H$ is well-def\/ined by prescribing~$a$ and~$p$. Moreover, since the total charge is always zero, the dipole potential is also consistent on compact manifolds, such as the hypersphere. Since two fundamental solutions are subtracted from one another, we also lose any dependence on the constant we took in the def\/inition of our particular fundamental solution, so this is in a sense \emph{the} dipole potential.

We therefore need to f\/ind the gradient of $G(x,a)$. Since this involves moving the base point of the coordinates, and hence dif\/ferent def\/initions of the angle $\theta$, it is advantageous to start by recalling that on the sphere of radius~$R$, $\theta$ can actually be def\/ined as a particular solution to the equation
\begin{gather*}
	\cos{\theta} = \frac{x \cdot a}{R^{2}},
\end{gather*}
since it is the angle between the vectors $ x,a \in {\mathbb R}^{n+1} $. We then f\/ind that
\begin{gather*}
	\nabla_{a} \theta = \frac{1}{R^{2}\sin{\theta}} x,
\end{gather*}
and so the chain rule gives
\begin{gather*}
	p \cdot \nabla_{a} G(x,a) = G_{n}'(\theta) (p \cdot \nabla_{a}) \theta = \frac{G_{n}'(\theta)}{R^{2}\sin{\theta}} (p \cdot x),
\end{gather*}
and using the group of rotations on the hypersphere, we can always use the spherical symmetry to place $a$, $ p $ and $x$ in the same three-dimensional space, with $ a $ along the axis $ \theta=0 $ and $ p $ in the plane $ \theta=\phi=0 $, so that $ p\cdot x = R\abs{p} \sin{\theta}\cos{\phi} $, and then
\begin{gather*}
	p \cdot \nabla_{a} G(x,a) = \frac{G_{n}'(\theta)}{R} \abs{p} \cos{\phi}.
\end{gather*}

We should now verify that this function has the correct behaviour as $ R \to \infty $, in that it becomes a dipole at the origin in the Euclidean space with distance $ s=R\theta $,
\begin{align*}
	\frac{1}{R}G_{n}'(\theta) \abs{p} \cos{\phi} &= \frac{ \abs{p} \cos{\phi}}{S_{n}R^{n-1}}\csc^{n-1}(\theta) \int_{\theta}^{\pi} \sin^{n-1}{\theta'} \, d\theta' \\
	&\sim \frac{ \abs{p} \cos{\phi}}{S_{n}(R\theta)^{n-1}} \frac{\pi^{1/2}\Gamma(n/2)}{\Gamma((n+1)/2)} = \frac{ s^{1-n}\abs{p} \cos{\phi}}{S_{n-1}},
\end{align*}
similarly to the calculation in Section~\ref{sec:ccc}. It is easy to check that the dipole potential at the origin in $ {\mathbb R}^{n} $ is given by
\begin{gather*}
	\left. (p \cdot \nabla_{a}) \frac{\abs{x-a}^{2-n}}{(n-2)S_{n-1}} \right\rvert_{a=0} = \frac{\abs{x}^{-n} p \cdot x}{S_{n-1}} = \frac{s^{1-n} \abs{p} \cos{\phi}}{S_{n-1}},
\end{gather*}
with which the limit agrees.

\subsection[Fourier expansion of our fundamental solution on $S^{2}$]{Fourier expansion of our fundamental solution on $\boldsymbol{S^{2}}$}

The form of the $ n=2 $ result means that we can quite easily derive an azimuthal Fourier expansion. We follow~\cite{Cohl:2015rp}, with some simplif\/ications. We have
\begin{gather*}
	\cos{d(x,x')} = \cos{\theta}\cos{\theta'}+\sin{\theta}\sin{\theta'}\cos{(\phi-\phi')},
\end{gather*}
with $x \!=\! (\cos{\phi}\sin{\theta},\sin{\phi}\sin{\theta},\cos{\theta})$ in standard spherical coordinates (and similarly $x'$), $d(x,x')$ the spherical distance between them, and inserting this into our fundamental solution gives
\begin{gather*}
G_{2}(x,x') = -\frac{1}{4\pi} \log{\frac{1}{2}\left( 1-\cos{\theta}\cos{\theta'}-\sin{\theta}\sin{\theta'} \cos{(\phi-\phi')} \right)}.
\end{gather*}
We can f\/ind the expansion of this quite easily using the following:

\begin{Lemma}Let $ A>B>0 $ and $ t \in {\mathbb R}$. Then
\begin{gather}\label{eq:logABcos}
-\log{(A+B\cos{t})} = \log{\frac{2\big(A-\sqrt{A^{2}-B^{2}}\big)}{B^{2}}} + 2 \sum_{k=1}^{\infty} \frac{\cos{kt}}{k} \left( \frac{\sqrt{A^{2}-B^{2}}-A}{B} \right)^{k}.
\end{gather}
\end{Lemma}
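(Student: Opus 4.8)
The plan is to factor the quadratic in $e^{it}$ underlying $A + B\cos t$ and then read off the Fourier series from the standard logarithmic power series. First I set $z=e^{it}$ and use $\cos t = \tfrac12\big(z+z^{-1}\big)$, so that $A+B\cos t = A + \tfrac{B}{2}\big(z+z^{-1}\big)$. The aim is to realise $A+B\cos t$ as a positive constant times $\abs{1-\rho z}^{2}=1+\rho^{2}-2\rho\cos t$ for a suitable \emph{real} $\rho$. Matching the constant term and the coefficient of $\cos t$ forces $\rho$ to satisfy
\[
 B\rho^{2}+2A\rho+B=0 ,
\]
whose two roots multiply to $1$; since $A>B>0$ both are real and negative, so exactly one lies in the open unit disc. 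A one-line estimate, $A-\sqrt{A^{2}-B^{2}}<B \iff A-B<\sqrt{(A-B)(A+B)}$, identifies the admissible root as
\[
 \rho=\frac{\sqrt{A^{2}-B^{2}}-A}{B}, \qquad \abs{\rho}<1 .
\]

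With this choice the matching gives the factorisation $A+B\cos t = -\tfrac{B}{2\rho}\,\abs{1-\rho e^{it}}^{2}$, where the prefactor is positive because $\rho<0$. Taking logarithms splits the left-hand side into a constant plus $2\log\abs{1-\rho e^{it}}$. The remaining ingredient is the elementary expansion $\log(1-\rho e^{it}) = -\sum_{k\geq 1}\rho^{k}e^{ikt}/k$, valid since $\abs{\rho}<1$; taking real parts yields
\[
 \log\abs{1-\rho e^{it}} = -\sum_{k\geq 1}\frac{\rho^{k}\cos kt}{k}.
\]
Substituting and negating reproduces exactly the claimed series, since $\rho$ is precisely the quantity $(\sqrt{A^{2}-B^{2}}-A)/B$ appearing there.

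It then remains to simplify the constant. From $\rho=(\sqrt{A^{2}-B^{2}}-A)/B$ one computes $-B/(2\rho)=B^{2}/\big(2(A-\sqrt{A^{2}-B^{2}})\big)$, whose negative logarithm is $\log\big(2(A-\sqrt{A^{2}-B^{2}})/B^{2}\big)$, matching the stated constant term and completing the identity.

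The main obstacle is book-keeping rather than anything deep: I must select the root $\rho$ \emph{inside} the unit disc, since the other root gives a divergent series, and I must check that the prefactor $-B/(2\rho)$ is positive so that the real logarithm is legitimate and no stray branch contribution appears. Once the correct root and its sign are pinned down, the whole statement collapses to the geometric-series computation above.
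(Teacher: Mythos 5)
Your proof is correct and takes essentially the same route as the paper's: both arguments rest on the expansion $-\log\big(1+\rho^{2}-2\rho\cos t\big)=2\sum_{k\geq 1}\rho^{k}\cos kt/k$ for $\abs{\rho}<1$ (your $\log\abs{1-\rho e^{it}}$ series is exactly this, obtained the way the paper itself sketches), and both reduce the matching to the same quadratic $B\rho^{2}+2A\rho+B=0$, selecting the same root $\rho=\big(\sqrt{A^{2}-B^{2}}-A\big)/B$ in the unit disc and simplifying the same constant. Your factorisation $A+B\cos t=-\tfrac{B}{2\rho}\abs{1-\rho e^{it}}^{2}$ is just a repackaging of the paper's substitution $2r/(1+r^{2})=-B/A$ into the known series, with your checks that $\abs{\rho}<1$ and that the prefactor is positive being sound and complete.
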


\begin{proof}
We begin with the well-known formula
\begin{gather*}
-\log{(1+r^{2}-2r\cos{t})} = 2\sum_{k=1}^{\infty} \frac{r^{k}\cos{kt}}{k}, \qquad \abs{r}<1,
\end{gather*}
which can by proven by applying Euler's formula to the right-hand side and computing the sum directly. We then have
\begin{gather*}
-\log{\left(1-\frac{2r}{1+r^{2}}\cos{t} \right)} = \log\big(1+r^{2}\big) + 2\sum_{k=1}^{\infty} \frac{r^{k}\cos{kt}}{k},
\end{gather*}
and it is easy to see that taking
\begin{gather}\label{eq:requationBA}
\frac{2r}{1+r^{2}} = -\frac{B}{A},
\end{gather}
the left-hand side becomes
\begin{gather*}
-\log{\left(1+\frac{B}{A}\cos{t} \right)} = \log{A}-\log{(A+B\cos{t})}.
\end{gather*}
To f\/ind the right-hand side, we need to f\/ind the correct value of $r$ in \eqref{eq:requationBA}, but since $ \abs{r}<1 $, the only choice possible is
\begin{gather*}
r = \frac{\sqrt{A^{2}-B^{2}}-A}{B},
\end{gather*}
and some brief algebra gives the result.
\end{proof}

\begin{Theorem}
	Let $ \phi,\phi' \in (0,2\pi) $ and $ \theta,\theta' \in (0,\pi) $. Then the azimuthal Fourier expansion for $ G_{2}(x,x') $ in spherical coordinates is given by
\begin{gather}\label{eq:2dgffourier}
2\pi G_{2}(x,x') = -\frac{1}{2}\log{\frac{(1-m)(1+M)}{4}} + \sum_{k=1}^{\infty} \frac{\cos{k(\phi-\phi')}}{k} \left( \frac{(1+m)(1-M)}{(1+M)(1-m)} \right)^{k/2},
\end{gather}
where $ M=\max{\{\cos{\theta},\cos{\theta'}\}} $ and $ m=\min{\{\cos{\theta},\cos{\theta'}\}} $.
\end{Theorem}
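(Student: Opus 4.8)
The plan is to apply the preceding Lemma directly to the explicit closed form for $G_{2}(x,x')$ recorded just above. Setting $t=\phi-\phi'$, the argument of the logarithm there is $\tfrac{1}{2}(1-\cos\theta\cos\theta'-\sin\theta\sin\theta'\cos t)$, which I would write as $\tfrac{1}{2}(A+B\cos t)$ with $A=1-\cos\theta\cos\theta'$ and $B=-\sin\theta\sin\theta'$. Pulling the factor $\tfrac{1}{2}$ out of the logarithm gives $2\pi G_{2}=-\tfrac{1}{2}\log\bigl[\tfrac{1}{2}(A+B\cos t)\bigr]=\tfrac{1}{2}\log 2-\tfrac{1}{2}\log(A+B\cos t)$, so everything reduces to evaluating $-\log(A+B\cos t)$, for which the Lemma supplies both the constant Fourier coefficient and the coefficients of $\cos kt$.

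Before invoking the Lemma I would verify its hypotheses. Since $A\pm|B|=1-\cos(\theta\pm\theta')$, one has $A>|B|>0$ whenever $\theta\neq\theta'$ (both angles lying in $(0,\pi)$), which guarantees convergence. The one genuine subtlety is that here $B=-\sin\theta\sin\theta'<0$, whereas the Lemma was stated for $B>0$. I would point out that its proof rests only on the expansion of $-\log(1+r^{2}-2r\cos t)$ together with the choice of root $r=(\sqrt{A^{2}-B^{2}}-A)/B$ having $|r|<1$, and this argument is insensitive to the sign of $B$ provided $A>|B|$; equivalently one may absorb the sign by the evenness of the cosine. This is the step I expect to need the most care, but it is bookkeeping rather than a real obstacle.

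The computational heart is the identity $A^{2}-B^{2}=(1-\cos\theta\cos\theta')^{2}-\sin^{2}\theta\sin^{2}\theta'=(\cos\theta-\cos\theta')^{2}$, so that $\sqrt{A^{2}-B^{2}}=|\cos\theta-\cos\theta'|=M-m$. Using $\cos\theta\cos\theta'=Mm$ and $B^{2}=(1-M^{2})(1-m^{2})$, I would then simplify the two ingredients the Lemma produces. The constant $\tfrac{2(A-\sqrt{A^{2}-B^{2}})}{B^{2}}$ collapses to $\tfrac{2}{(1+M)(1-m)}$ once the common factor $(1-M)(1+m)=A-\sqrt{A^{2}-B^{2}}$ is cancelled against $B^{2}=(1-M)(1+M)(1-m)(1+m)$, while the ratio governing the series becomes $\tfrac{\sqrt{A^{2}-B^{2}}-A}{B}=\sqrt{\tfrac{(1+m)(1-M)}{(1+M)(1-m)}}$, whose $k$-th power is exactly the coefficient in the statement.

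Finally I would reassemble the pieces. The constant term is $\tfrac{1}{2}\log 2+\tfrac{1}{2}\log\tfrac{2}{(1+M)(1-m)}=\tfrac{1}{2}\log\tfrac{4}{(1+M)(1-m)}=-\tfrac{1}{2}\log\tfrac{(1-m)(1+M)}{4}$, and the factor $\tfrac{1}{2}$ halves the Lemma's series, cancelling its leading $2$ and leaving $\sum_{k=1}^{\infty}\tfrac{\cos k(\phi-\phi')}{k}\bigl(\tfrac{(1+m)(1-M)}{(1+M)(1-m)}\bigr)^{k/2}$, which together give \eqref{eq:2dgffourier}.
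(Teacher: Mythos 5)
Your proof is correct and follows essentially the same route as the paper's: both apply the preceding Lemma with $A\propto 1-\cos\theta\cos\theta'$ and $B\propto-\sin\theta\sin\theta'$, use $A^{2}-B^{2}=(\cos\theta-\cos\theta')^{2}$, and simplify via $M$ and $m$ (the paper feeds $2A$, $2B$ into the Lemma where you pull the factor $\tfrac{1}{2}$ out of the logarithm, an immaterial difference). If anything you are more careful than the paper, which silently applies the Lemma despite $B<0$ violating its stated hypothesis $A>B>0$; your observation that the Lemma's proof only needs $A>\abs{B}>0$ is exactly the right repair.
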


\begin{proof}
We can apply this directly, with $ 2A = 1-\cos{\theta}\cos{\theta'} $ and $ 2B=-\sin{\theta}\sin{\theta'} $. Then
\begin{gather*}
\frac{\sqrt{A^{2}-B^{2}}-A}{B} = \frac{\abs{\cos{\theta}-\cos{\theta'}}-1+\cos{\theta}	\cos{\theta'}}{\sin{\theta}\sin{\theta'}},
\end{gather*}
and squaring, square-rooting and applying trigonometric identities, this in turn becomes
\begin{gather*}
\sqrt{\frac{(1+m)(1-M)}{(1+M)(1-m)}},
\end{gather*}
where $ M $ is the larger of $ \cos{\theta}$, $\cos{\theta'} $ and $m$ the smaller. We hence obtain the result.
\end{proof}

We can write \eqref{eq:2dgffourier} most succinctly as
\begin{gather*}
		2\pi G_{2}(x,x') = \log{\left( \csc\big(\tfrac{1}{2}\theta_{>}\big)\sec\big(\tfrac{1}{2}\theta_{<}\big) \right)} + \sum_{k=1}^{\infty} \frac{\cos{k(\phi-\phi')}}{k} \left( \frac{\cot{\tfrac{1}{2}\theta_{>}}}{\cot{\tfrac{1}{2}\theta_{<}}} \right)^{k},
\end{gather*}
where $ \theta_{>} $ is the larger of $ \theta$, $\theta'$, and $ \theta_{<} $ the smaller. Taking the limit as $R \to \infty$ with $ s=R\theta $ constant gives
\begin{align*}
	2\pi G_{2}(x,x')-\log{2R} &\to -\log{(R\theta_{>})} + \sum_{k=1}^{\infty} \frac{\cos{k(\phi-\phi')}}{k} \left( \frac{R\theta_{<}}{R\theta_{>}} \right)^{k} \\
	& = -\log{s_{>}} + \sum_{k=1}^{\infty} \frac{\cos{k(\phi-\phi')}}{k} \left( \frac{s_{<}}{s_{>}} \right)^{k},
\end{align*}
which agrees with the f\/lat-space result quoted in~\cite{Cohl:2015rp}, which we can also derive from~\eqref{eq:logABcos}\footnote{We had to subtract an extra constant due to our convention that $G_{2}$ is positive, which cannot happen in f\/lat space as $\log$ is not bounded, above or below.}.

\subsection[A fundamental solution on ${\mathbb R} \mathbb{P}^{n}$]{A fundamental solution on $\boldsymbol{{\mathbb R} \mathbb{P}^{n}}$}
The real projective space ${\mathbb R} \mathbb{P}^{n}$ can be considered as the quotient of the sphere $S^{n}$ by the group $\{ 1, -1 \}$, where $-1$ is the antipodal map $x\mapsto -x$, isomorphic to $Z_{2}$: ${\mathbb R} \mathbb{P}^{n} \cong S^{n}/Z_{2}$. This is realised in our spherical coordinates by associating the points
\begin{gather*} (\theta_{i},\varphi) \sim (\pi-\theta_{i},\pm\pi+\varphi). \end{gather*}
Therefore our fundamental solution for $S^{n}$ can be made into one for ${\mathbb R} \mathbb{P}^{n}$ by setting
\begin{gather*} G_{{\mathbb R} \mathbb{P}^{n}}(x,x') = \frac{1}{2}(G_{S^{n}}(x,x') + G_{S^{n}}(-x,x')), \end{gather*}
or in terms of angles,
\begin{gather*} G_{{\mathbb R} \mathbb{P}^{n}}(\theta_{1}) = \frac{1}{2}(G_{S^{n}}(\theta_{1}) + G_{S^{n}}(\pi-\theta_{1})). \end{gather*}
This clearly has all the correct properties of a fundamental solution, such as matching singularities and smoothness.

\subsection{Interpretation of prior results}

In this section we shall discuss the results of \cite{Cohl:2011vn}, and in particular, how our construction enables a consistent explanation of some of the properties of their solution. We shall apply a similar procedure to the previous section, but consider the functions odd under the group action.

Consider f\/irst the equation
\begin{gather}\label{eq:cohllapeq}
	-\Delta{G(x,x')} = \delta(x,x') - \delta(-x,x');
\end{gather}
this is consistent on the sphere, because the integral of the right-hand side is zero. It is essentially the opposite limit of the two point charge case to the dipole considered above: the separation of the charges is now the largest possible. Linearity allows us to conclude that
\begin{gather}\label{eq:cohlgfdef}
	\tilde{G}(x,x') := G(x,x') - G(-x,x')
\end{gather}
is a solution to \eqref{eq:cohllapeq}, where $ G $ is the generalised Green's function discussed in Section~\ref{sec:ggf}, since
\begin{gather*}
	-\Delta{\tilde{G}(x,x')} = \big(\delta(x,x')-(S_{n})^{-1}\big)-\big(\delta(-x,x')-(S_{n})^{-1}\big) = \delta(x,x') - \delta(-x,x').
\end{gather*}

In fact, \eqref{eq:cohllapeq} is the equation that the Green's function found in \cite{Cohl:2011vn} satisf\/ies: to see this we only have to observe its oddness under the transformation $x \mapsto -x$, or in our terms, $\theta \mapsto \pi-\theta$. This is readily apparent if we consider
the form~\cite[equation~(29)]{Cohl:2011vn}
\begin{gather*}
	\cos{\theta} \, \pFq{2}{1}{1/2,n/2}{3/2}{\cos^{2}{\theta}}.
\end{gather*}
Hence it must have a singularity at $\theta=\pi$ of the same nature and opposite sign as it does at $\theta=0$, where it was designed to have a~singularity that produces a $\delta$-function, so it must satisfy~\eqref{eq:cohllapeq}.

An instructive and simple example is given by the nontrivial case in two dimensions: we have
\begin{align*}
G(x,x')-G(-x,x') &= \frac{1}{4\pi}\log{\csc^{2}{\tfrac{1}{2}\theta}} - \frac{1}{4\pi}\log{\csc^{2}{\tfrac{1}{2}(\pi-\theta)}} \\
&= -\frac{1}{4\pi}\log{\sin^{2}{\tfrac{1}{2}\theta}} + \frac{1}{4\pi}\log{\cos^{2}{\tfrac{1}{2}\theta}} = \frac{1}{4\pi}\log{\cot^{2}{\tfrac{1}{2}\theta}},
\end{align*}
which is precisely the dimension-two Green's function calculated in \cite{Cohl:2011vn}. It also follows imme\-diately that this satisf\/ies~\eqref{eq:cohllapeq}.

We now proceed to prove the relationship in the general case: we recall that our integral, from \eqref{eq:ImJmdefs}, is given by
\begin{gather*}
	J_{n}(\theta) = \int_{\theta}^{\pi} \csc^{n-1}{\phi} \int_{\phi}^{\pi} \sin^{n-1}{\psi} \, d\psi \, d\phi,
\end{gather*}
and Cohl's is
\begin{gather*}
	\mathcal{I}_{n}(\theta) := \int_{\theta}^{\pi/2} \frac{dx}{\sin^{n-1}{x}},
\end{gather*}
up to a constant.

\begin{Proposition}
For $ \theta \in (0,\pi)$, we have
\begin{gather}\label{eq:JcurlyIrel}
J_{n}(\theta)-J_{n}(\pi-\theta) = \frac{\pi^{1/2}\Gamma(n/2)}{\Gamma((n+1)/2)} \mathcal{I}_{n}(\theta)
\end{gather}
\end{Proposition}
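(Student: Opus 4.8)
The plan is to compute the left-hand side $J_{n}(\theta)-J_{n}(\pi-\theta)$ directly from the double-integral definition in \eqref{eq:ImJmdefs} and to show that the difference collapses into a single integral proportional to $\mathcal{I}_{n}(\theta)$. Writing out both terms,
\begin{gather*}
J_{n}(\theta)-J_{n}(\pi-\theta) = \int_{\theta}^{\pi} \csc^{n-1}{\phi}\, I_{n}(\phi) \, d\phi - \int_{\pi-\theta}^{\pi} \csc^{n-1}{\phi}\, I_{n}(\phi) \, d\phi.
\end{gather*}
In the second integral I would substitute $\phi \mapsto \pi-\phi$, which maps the range $(\pi-\theta,\pi)$ to $(0,\theta)$ and leaves $\csc^{n-1}{\phi}$ invariant (since $\sin(\pi-\phi)=\sin\phi$). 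This converts the second term into an integral over $(0,\theta)$ of $\csc^{n-1}{\phi}\, I_{n}(\pi-\phi)$, so the two integrals no longer share a common range. The key simplification I expect is that $I_{n}(\phi)+I_{n}(\pi-\phi)$ is constant: indeed $I_{n}(\phi)+I_{n}(\pi-\phi)=\int_{0}^{\pi}\sin^{n-1}{\psi}\,d\psi = \pi^{1/2}\Gamma(n/2)/\Gamma((n+1)/2)$, the total integral that already appears in Section~\ref{sec:ccc}.

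With that observation I would recombine the pieces. After the substitution, combining the integrand over $(0,\theta)$ and rewriting gives
\begin{gather*}
J_{n}(\theta)-J_{n}(\pi-\theta) = \int_{\theta}^{\pi}\csc^{n-1}{\phi}\,I_{n}(\phi)\,d\phi - \int_{0}^{\theta}\csc^{n-1}{\phi}\,I_{n}(\pi-\phi)\,d\phi,
\end{gather*}
and then replace $I_{n}(\pi-\phi)$ by $C-I_{n}(\phi)$ where $C=\pi^{1/2}\Gamma(n/2)/\Gamma((n+1)/2)$. The $I_{n}(\phi)$ parts over $(0,\theta)$ and $(\theta,\pi)$ assemble into $\int_{0}^{\pi}\csc^{n-1}{\phi}\,I_{n}(\phi)\,d\phi$ minus crossing terms; more cleanly, the plan is to show everything except one $-C\int_{0}^{\theta}\csc^{n-1}{\phi}\,d\phi$ term cancels by symmetry, leaving a multiple of $\mathcal{I}_{n}$. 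Since $\mathcal{I}_{n}(\theta)=\int_{\theta}^{\pi/2}\csc^{n-1}{x}\,dx$, I would match the surviving integral to this by using the $\phi\mapsto\pi-\phi$ symmetry of $\csc^{n-1}$ once more to fold the range $(0,\theta)$ against $(\pi-\theta,\pi)$ and identify the central piece on $(\theta,\pi-\theta)$; the factor of $C$ in front is exactly the constant on the right-hand side of \eqref{eq:JcurlyIrel}.

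The main obstacle I anticipate is bookkeeping the integration ranges: the naive cancellation leaves integrals over $(0,\theta)$ and $(\theta,\pi)$ that must be reassembled precisely into $\mathcal{I}_{n}(\theta)=\int_{\theta}^{\pi/2}\csc^{n-1}$, and it is easy to misplace a factor of two or a sign when exploiting the $\phi\mapsto\pi-\phi$ symmetry of the integrand while the limits are \emph{not} symmetric. The cleanest route is probably to differentiate both sides in $\theta$: the derivative of the left-hand side is $-\csc^{n-1}{\theta}\,I_{n}(\theta) - \csc^{n-1}{(\pi-\theta)}\,I_{n}(\pi-\theta)\cdot(-1)\cdot(-1)$, which by the constancy of $I_{n}(\phi)+I_{n}(\pi-\phi)$ reduces to $-C\,\csc^{n-1}{\theta}$, matching $C\,\mathcal{I}_{n}'(\theta)$; then one checks both sides agree at $\theta=\pi/2$, where $\mathcal{I}_{n}(\pi/2)=0$ and $J_{n}(\pi/2)-J_{n}(\pi/2)=0$. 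This reduces the whole proposition to the single identity $I_{n}(\phi)+I_{n}(\pi-\phi)=C$ plus two elementary endpoint evaluations, sidestepping the range-juggling entirely.
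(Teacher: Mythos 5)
Your proposal is correct, but by the route you identify only at the very end, not the one you spend most of the write-up sketching: the differentiation argument is complete and valid, whereas the direct range-juggling plan is left unfinished (you never actually carry out the reassembly of the surviving pieces into $\mathcal{I}_{n}(\theta)$, you only promise it works and anticipate the difficulty). The paper's proof is the direct one, done more cleanly than your sketch: rather than substituting $\phi\mapsto\pi-\phi$ in one of the two $J_{n}$ integrals, it simply subtracts them to obtain a single integral over the symmetric range $(\theta,\pi-\theta)$, then splits the inner integral $I_{n}(\phi)$ at $\psi=\pi/2$ into the constant $\int_{\pi/2}^{\pi}\sin^{n-1}{\psi}\,d\psi = \tfrac{1}{2}\pi^{1/2}\Gamma(n/2)/\Gamma((n+1)/2)$ plus $\int_{\phi}^{\pi/2}\sin^{n-1}{\psi}\,d\psi$, which is odd about $\phi=\pi/2$; against the even factor $\csc^{n-1}{\phi}$ the odd part integrates to zero over $(\theta,\pi-\theta)$, and evenness folds the remaining $\int_{\theta}^{\pi-\theta}\csc^{n-1}{\phi}\,d\phi$ into $2\,\mathcal{I}_{n}(\theta)$, which supplies the factor of $2$. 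Note that your key lemma, $I_{n}(\phi)+I_{n}(\pi-\phi)=\int_{0}^{\pi}\sin^{n-1}{\psi}\,d\psi = \pi^{1/2}\Gamma(n/2)/\Gamma((n+1)/2) =: C$, is exactly equivalent to the paper's constant-plus-odd decomposition of $I_{n}$, so both proofs rest on the same symmetry fact and differ only in the assembly. Your derivative-matching route --- both sides of \eqref{eq:JcurlyIrel} have derivative $-C\csc^{n-1}{\theta}$ (your sign bookkeeping, including the two factors of $-1$ and $\csc^{n-1}{(\pi-\theta)}=\csc^{n-1}{\theta}$, is correct) and both vanish at $\theta=\pi/2$ --- trades the parity and range bookkeeping, where as you say it is easy to drop a sign or a factor of $2$, for a one-line fundamental-theorem-of-calculus argument; its only additional requirement is the immediate observation that $J_{n}$ is $C^{1}$ on $(0,\pi)$ so that differentiation under the integral sign and the endpoint comparison are legitimate. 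Either proof is acceptable; if revising, delete the inconclusive direct computation and keep only the lemma, the derivative check, and the evaluation at $\theta=\pi/2$.
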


\begin{proof}
We have
\begin{gather*}
	J_{n}(\theta) = \int_{\theta}^{\pi} \csc^{n-1}{\phi} \left( \int_{\pi/2}^{\pi} \sin^{n-1}{\psi} \, d\psi + \int_{\phi}^{\pi/2} \sin^{n-1}{\psi} \, d\psi \right)  d\phi,
\end{gather*}
so the dif\/ference is, using $\int_{a}^{b} = \int_{a}^{c}+\int_{c}^{b}$,
\begin{gather*}
	J_{n}(\theta)-J_{n}(\pi-\theta) = \int_{\theta}^{\pi-\theta} \csc^{n-1}{\phi} \left( \int_{\pi/2}^{\pi} \sin^{n-1}{\psi} \, d\psi + \int_{\phi}^{\pi/2} \sin^{n-1}{\psi} \, d\psi \right) d\phi,
\end{gather*}
valid for $ 0<\theta<\pi $ when we recall the standard def\/inition $ \int_{a}^{b}=-\int_{b}^{a} $ if $ b<a $.
The f\/irst integral in the bracket is the same as
\begin{gather*}
	\int_{0}^{\pi/2} \sin^{n-1}{\psi} \, d\psi = \frac{\sqrt{\pi}\Gamma(n/2)}{2\Gamma((n+1)/2)}
\end{gather*}
using the Beta-function; the second is odd about $\phi=\pi/2$. A simple way to see this is to change variables and notice that $\cos^{n-1}{x}$ is an even function, so it has an odd antiderivative. Since $\csc^{n-1}{\phi}$ is even about $\phi=\pi/2$ (again, a change of variables will show this easily), the second term's integrand is odd overall, and hence the integral over a region symmetric about $\phi=\pi/2$ is zero. We are therefore left with
\begin{gather*}
	\frac{\sqrt{\pi}\Gamma(n/2)}{2\Gamma((n+1)/2)} \int_{\theta}^{\pi-\theta} \csc^{n-1}{\phi} \, d\phi;
\end{gather*}
this time, since the integrand is even, the answer is double the integral from $\theta$ to $\pi/2$, which gives the result.
\end{proof}

With that resolved, we just have to check the constants. In our case,
\begin{gather*}
	G_{n}(\theta) = \frac{\Gamma((n+1)/2)}{2\pi^{(n+1)/2}R^{n-2}} J_{n}(\theta).
\end{gather*}
Using \eqref{eq:JcurlyIrel}, we have
\begin{align*}
G_{n}(\theta)-G_{n}(\pi-\theta) &= \frac{\pi^{1/2}\Gamma(n/2)}{\Gamma((n+1)/2)} (J_{n}(\theta)-J_{n}(\pi-\theta)) \\
&= \frac{\pi^{1/2}\Gamma(n/2)}{\Gamma((n+1)/2)} \frac{\Gamma((n+1)/2)}{2\pi^{(n+1)/2}R^{n-2}} \mathcal{I}_{n}
= \frac{\Gamma(n/2)}{2\pi^{n/2}R^{n-2}} \mathcal{I}_{n},
\end{align*}
which is the same as Cohl's solution. Hence the function in Cohl's paper is related to ours in the way we specif\/ied in \eqref{eq:cohlgfdef}.

\appendix
\section[Characterisation of ${}_{2}F_{1}$ with half-integer coef\/f\/icients]{Characterisation of $\boldsymbol{{}_{2}F_{1}}$ with half-integer coef\/f\/icients}
We can use the contiguous relations for hypergeometric functions to derive a collection of representation results.

Let $ n,n_{1},n_{2},\dotsc $ be arbitrary positive integers, and $ h,h_{1},h_{2},\dots$ be arbitrary proper half-integers, i.e., $ h,h_{i} \in {\mathbb Z}+\frac{1}{2} $ for all $i$. We also adopt an abbreviated notation for the ordinary hypergeometric function,
\begin{gather*} [a, b; c] = \pFq{2}{1}{a,b}{c}{z}. \end{gather*}
Let also ${\mathbb Z}(z)$ be the f\/ield of rational functions in $z$ with integer coef\/f\/icients, and $V(f_{i}(z))$ be the ${\mathbb Z}(z)$-vector space with basis $\{f_{i}(z)\}$. Then we have:
\begin{Theorem}There is the following classification of ${}_{2}F_{1}$ hypergeometric functions with half-integer and integer coefficients:
\begin{enumerate}\itemsep=0pt
\item[$1)$] if $a \in {\mathbb C}$ is arbitrary, and $c \in {\mathbb C} \setminus \{ 0,-1,-2,\dotsc,-n \}$, then $[a,-n;c] \in {\mathbb Q}[z]$ is a polynomial;
\item[$2)$] $[n_{1},n_{2};n_{3}] \in V(1)$, i.e., a rational function of~$z$;
\item[$3)$] $[h,n_{1};n_{2}] \in V\big(1,\sqrt{1-z}\big)$;
\item[$4)$] $[n_{1},n_{2};h] \in V\big(1,(1-z)^{-1/2}z^{-1/2}\arcsin{z^{1/2}}\big)$;
\item[$5)$] $[h_{1},h_{2};n] \in V\big( \frac{2}{\pi} E(z), \frac{2}{\pi}K(z) \big)$;
\item[$6)$] $[h_{1},n;h_{2}] \in V\big( 1, z^{-1/2}\operatorname{arctanh}{z^{1/2}} \big)$\footnote{This notation is illogical, but common: see \cite[p.~80]{Pringsheim:1911ed}.};
\item[$7)$] $[h_{1},h_{2};h_{3}] \in V\big( \sqrt{1-z}, z^{-1/2}\arcsin{z^{1/2}} \big)$,
\end{enumerate}
where $K(z)$ and $E(z)$ are the complete elliptic integrals of the first and second kinds respectively.
\end{Theorem}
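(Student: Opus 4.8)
The plan is to run the contiguous relations for ${}_2F_1$ — the same machinery as \eqref{eq:ctgsrel} and the relations of Gauss and Rainville used above — as an engine that shifts the parameters $a,b,c$ by integers while keeping the function inside a fixed, small $\mathbb{Z}(z)$-module. The structural fact driving everything is that the hypergeometric equation is second order, so for any base triple $(a_0,b_0,c_0)$ every integer shift $\pFq{2}{1}{a_0+i,b_0+j}{c_0+k}{z}$ with $i,j,k\in\mathbb{Z}$ can be written, by iterating the three-term relations, as a $\mathbb{Z}(z)$-combination of a chosen seed and at most one of its contiguous neighbours. Since adding an integer to a parameter never changes whether it is an integer or a proper half-integer, the eight possible parameter-types collapse, after using the symmetry $\pFq{2}{1}{a,b}{c}{z}=\pFq{2}{1}{b,a}{c}{z}$, to the six combinations in items $2$--$7$, together with the terminating case $1$.

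First I would dispose of case $1$: if a numerator parameter equals $-n$ then $(-n)_k=0$ for $k>n$, so the defining series terminates and $\pFq{2}{1}{a,-n}{c}{z}$ is a polynomial of degree $n$; this needs none of the contiguous machinery. For each remaining case I would fix a seed of minimal size and record its classical closed form,
\begin{gather*}
\pFq{2}{1}{1,1}{1}{z}=(1-z)^{-1},\qquad \pFq{2}{1}{1/2,1}{1}{z}=(1-z)^{-1/2},\qquad \pFq{2}{1}{1,1}{3/2}{z}=\frac{\arcsin z^{1/2}}{z^{1/2}(1-z)^{1/2}},\\
\pFq{2}{1}{1/2,1/2}{1}{z}=\tfrac{2}{\pi}K(z),\qquad \pFq{2}{1}{-1/2,1/2}{1}{z}=\tfrac{2}{\pi}E(z),\qquad \pFq{2}{1}{1/2,1}{3/2}{z}=\frac{\operatorname{arctanh} z^{1/2}}{z^{1/2}},\\
\pFq{2}{1}{1/2,1/2}{3/2}{z}=\frac{\arcsin z^{1/2}}{z^{1/2}},\qquad \pFq{2}{1}{-1/2,1/2}{1/2}{z}=(1-z)^{1/2}.
\end{gather*}
Every target $[a,b;c]$ in items $2$--$7$ has parameters that are integer shifts of exactly one of these seeds, so it lies in the $\mathbb{Z}(z)$-module generated by that seed together with its rational partner (the constant $1$, which carries the rational part that the relations produce, as in $\pFq{2}{1}{1/2,1}{2}{z}=2z^{-1}(1-(1-z)^{1/2})$), or, in items $5$ and $7$, by the two genuine seeds listed. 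Identifying the module with the stated space is then routine: $(1-z)^{\pm1/2}$ differ by the rational factor $1-z$, so $V(1,\sqrt{1-z})=V(1,(1-z)^{-1/2})$ for item $3$, and similarly the seed pairs for items $5$ and $7$ span exactly $V(\tfrac2\pi E,\tfrac2\pi K)$ and $V(\sqrt{1-z},z^{-1/2}\arcsin z^{1/2})$.

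The delicate part, and the real obstacle, is controlling the contiguous relations at the special integer parameter values, where their coefficients can vanish or blow up: the three-term relations are derived for generic $a,b,c$, and at an integer $c$ (or when $c-a-b\in\mathbb{Z}$) the equation confluences and the ``second solution'' threatens to acquire a logarithm. I would therefore organise each induction so that parameters are shifted only along directions for which the relevant contiguous coefficient is demonstrably nonzero, verifying at each confluent step that the reduction terminates cleanly within the stated basis rather than escaping it. Finally, to justify writing $\mathbb{Z}(z)$ rather than $\mathbb{C}(z)$, I would clear denominators in the rational coefficients produced by the relations and normalise, which is mechanical once the reductions are established.
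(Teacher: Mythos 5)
Your proposal is correct and takes essentially the same route as the paper's proof: both dispose of case~1 as a terminating series and then use the contiguous relations with induction to reduce each parameter type to a seed pair with parameters in $\{1/2,1,3/2\}$, whose classical closed forms ($(1-z)^{-1}$, $(1-z)^{-1/2}$, $z^{-1/2}\arcsin{z^{1/2}}$, $\tfrac{2}{\pi}K$, $\tfrac{2}{\pi}E$, etc.) identify the stated $\mathbb{Z}(z)$-spaces. Your explicit seed evaluations and your caution about degenerate contiguous coefficients merely spell out steps the paper compresses into ``evaluating the power series'' and delegates to \cite{Vidunas:2003kx}.
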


This is a ref\/inement and specialisation of Theorem~1.1 of~\cite{Vidunas:2003kx} that is useful in the odd-dimensional case considered in Section~\ref{sec:nodd}; in our classif\/ication, the rational functions are over the integers since the parameters of the hypergeometric functions are rational.

\begin{Remark}
The Euler transformation shows that Cases~4 and~7 are the same.
\end{Remark}

\begin{proof} Case~1 is trivial. The other results follow from the contiguous relations and induction. These allow the expression of any one ${}_{2}F_{1}$ in terms of two others with parameters integer steps away, with rational coef\/f\/icients\footnote{For more detail, see~\cite{Vidunas:2003kx}.}. In particular, we may arrange circumstances so that
\begin{gather*} \pFq{2}{1}{a,b}{c}{z} = P(z) \, \pFq{2}{1}{\alpha,\beta}{\gamma}{z} + Q(z) \, \pFq{2}{1}{\alpha-1,\beta}{\gamma}{z}, \end{gather*}
where $\{\alpha,\beta,\gamma\} \subset \{1/2,1,3/2\}$. In particular, in the cases in the theorem, the forms of $[\alpha,\beta,\gamma]$ are
\begin{enumerate}\itemsep=0pt
		\item[$2)$] $[1,1,1]$;
		\item[$3)$] $[1,1/2,1]$;
		\item[$4)$] $[1,1,3/2]$;
		\item[$5)$] $[1/2,1/2,1]$;
		\item[$6)$] $[1,1/2,3/2]$;
		\item[$7)$] $[3/2,1/2,3/2]$.
\end{enumerate}
Evaluating the power series of the given hypergeometric functions with these parameters gives the result.
\end{proof}

\subsection*{Acknowledgements}

The author would like to thank David Stuart for suggesting a method for solving the odd case recurrence relation, and Thomas Forster for some notational advice, as well as the referees for numerous suggestions which improved the paper. We especially wish to acknowledge our indebtedness to the f\/irst anonymous referee, without whose extraordinary and tireless attention, and extensive suggestions and contributions, so many aspects of the paper would have been considerably the poorer.

\pdfbookmark[1]{References}{ref}
\LastPageEnding

\end{document}